\numberwithin{equation}{section}
\newtheorem{theorem}{Theorem}[section]
\newtheorem*{theorem*}{Theorem}
\newtheorem*{lemma*}{Lemma}
\newaliascnt{lemma}{theorem}
\newtheorem{lemma}[lemma]{Lemma}
\crefname{lemma}{Lemma}{Lemmas}
\newaliascnt{claim}{theorem}
\crefname{claim}{Claim}{Claims}
\newaliascnt{corollary}{theorem}
\newtheorem{corollary}[corollary]{Corollary}
\crefname{corollary}{Corollary}{Corollaries}
\newaliascnt{construction}{theorem}
\crefname{construction}{Construction}{Constructions}
\newaliascnt{fact}{theorem}
\crefname{fact}{Fact}{Facts}
\newaliascnt{proposition}{theorem}
\crefname{proposition}{Proposition}{Propositions}
\newaliascnt{conjecture}{theorem}
\crefname{conjecture}{Conjecture}{Conjectures}
\newaliascnt{definition}{theorem}
\newtheorem{definition}[definition]{Definition}
\crefname{definition}{Definition}{Definitions}
\newaliascnt{remark}{theorem}
\crefname{remark}{Remark}{Remarks}
\newaliascnt{observation}{theorem}
\crefname{observation}{Observation}{Observations}
\newaliascnt{notation}{theorem}
\crefname{notation}{Notation}{Notations}
\newcommand{\ie}{i.e.,\ }
\newcommand{\B}{\{0,1\}}
\newcommand{\Z}{\mathbb{Z}}
\newcommand{\fo}{f_0}
\newcommand{\fl}{f_1}
\newcommand{\claw}[1]{h_{#1}}
\newcommand{\problem}[1]{\ensuremath{\textsc{#1}}\xspace}
\newcommand{\Pigeon}{\problem{Pigeon}}
\newcommand{\Collision}{\problem{Collision}}
\newcommand{\Prefixcollision}{\problem{Prefix-Collision}}
\newcommand{\Dove}{\problem{Dove}}
\newcommand{\Claw}{\problem{Claw}}
\newcommand{\GeneralClaw}{\problem{General-Claw}}
\newcommand{\DLog}{\problem{DLog}}
\newcommand{\DLogp}{\ensuremath{\DLog_p}\xspace}
\newcommand{\Index}{\problem{Index}}
\newcommand{\Blichfeldt}{\problem{Blichfeldt}}
\newcommand{\class}[1]{\ensuremath{\mathsf{#1}}\xspace}
\newcommand{\TFNP}{\class{TFNP}}
\newcommand{\TFUP}{\class{TFUP}}
\newcommand{\PPAD}{\class{PPAD}}
\newcommand{\PPA}{\class{PPA}}
\newcommand{\PPP}{\class{PPP}}
\newcommand{\PWPP}{\class{PWPP}}
\newcommand{\NP}{\class{NP}}
\newcommand{\DLP}{DLP\xspace}
\newcommand{\N}{\mathbb{N}}
\newcommand{\GG}{\mathds{G}}
\newcommand{\ind}{\mathcal{I}_\GG}
\newcommand{\htodo}[1]{}
\newcommand{\ptodo}[1]{}
\newcommand{\circuit}[1]{\textsf{#1}}
\newcommand{\bd}{\textup{bd}}
\newcommand{\bc}{\textup{bc}}
\newcommand{\I}{\mathcal{I}_\mathds{G}}
\newcommand{\m}[1]{\textbf{#1}}
\begin{document}

\title{On Search Complexity of Discrete Logarithm\thanks{A preliminary version of this work appeared in the 46th International Symposium on Mathematical Foundations of Computer Science, MFCS 2021~\cite{HubacekV21}.
Research was supported by the Grant Agency of the Czech Republic under the grant agreement no. 19-27871X and by the Charles University projects PRIMUS/17/SCI/9 and UNCE/SCI/004.}}
\author[1]{Pavel Hubáček}
\author[2]{Jan Václavek}
\affil[1,2]{Charles University}
\affil[1]{\texttt{hubacek@iuuk.mff.cuni.cz}}
\affil[2]{\texttt{honza.vaclavek@email.cz}}
\date{}

\maketitle

\begin{abstract}
In this work, we study the discrete logarithm problem in the context of \TFNP~-- the complexity class of search problems with a syntactically guaranteed existence of a~solution for all instances.
Our main results establish that suitable variants of the discrete logarithm problem are complete for the complexity class \PPP, respectively \PWPP, i.e., the subclasses of \TFNP capturing total search problems with a solution guaranteed by the pigeonhole principle, respectively the weak pigeonhole principle.
Besides answering an open problem from the recent work of Sotiraki, Zampetakis, and Zirdelis (FOCS’18), our completeness results for \PPP and \PWPP have implications for the recent line of work proving conditional lower bounds for problems in \TFNP under cryptographic assumptions.
In particular, they highlight that any attempt at basing average-case hardness in subclasses of \TFNP (other than \PWPP and \PPP) on the average-case hardness of the discrete logarithm problem must exploit its structural properties beyond what is necessary for constructions of collision-resistant hash functions.

Additionally, our reductions provide new structural insights into the class \PWPP by establishing two new \PWPP-complete problems.
First, the problem \Dove, a relaxation of the \PPP-complete problem \Pigeon. \Dove is the first \PWPP-complete problem not defined in terms of an explicitly shrinking function.
Second, the problem \Claw, a total search problem capturing the computational complexity of breaking claw-free permutations.
In the context of \TFNP, the \PWPP-completeness of \Claw matches the known intrinsic relationship between collision-resistant hash functions and claw-free permutations established in the cryptographic literature.
\end{abstract}

\newpage

\section{Introduction}\label{sec:Introduction}

The Discrete Logarithm Problem (DLP) and, in particular, its conjectured average-case hardness lies at the foundation of many practical schemes in modern cryptography.
To day, no significant progress towards a generic efficient algorithm solving \DLP has been made (see, e.g., the survey by Joux, Odlyzko, and Pierrot~\cite{JouxOP14} and the references therein).

One of the distinctive properties of \DLP is its \emph{totality}, i.e., given a generator $g$ of a cyclic multiplicative group $(\GG,\star)$, we know that a solution $x$ for \DLP exists for any target element $t=g^x$ in the group.
Thus, the perceived hardness of \DLP does not stem from the uncertainty whether a solution exists but pertains to the search problem itself.
In this respect, \DLP is not unique -- there are various total search problems with unresolved computational complexity in many domains such as algorithmic game theory, computational number theory, and combinatorial optimization, to name but a few.
More generally, the complexity of all total search problems is captured by the complexity class \TFNP.

In order to improve our understanding of the seemingly disparate problems in \TFNP, Papadimitriou~\cite{Papadimitriou94} suggested to classify total search problems based on syntactic arguments ensuring the existence of a solution.
His approach proved to be extremely fruitful and it gave rise to various subclasses of \TFNP that cluster many important total search problems.
For example,
\begin{description}
\item[\PPAD:] formalizes parity arguments on directed graphs and captures, e.g., the complexity of computing (approximate) Nash equilibria in normal-form games~\cite{DaskalakisGP09,ChenDT09}.
\item[\PPA:] formalizes parity arguments on \emph{undirected} graphs and captures, e.g., the complexity of Necklace splitting~\cite{Filos-RatsikasG19}.
\item[\PPP:] formalizes the pigeonhole principle and captures, e.g., the complexity of solving problems related to integer lattices~\cite{SotirakiZZ18}.
\item[\PWPP:] formalizes the \emph{weak} pigeonhole principle and captures, e.g., the complexity of breaking collision-resistant hash functions and solving problems related to integer lattices~\cite{SotirakiZZ18}.
\end{description}

\paragraph{\DLP and \TFNP.}
\DLP seems to naturally fit the \TFNP landscape.
Though, a closer look reveals a subtle issue regarding its totality stemming from the need to certify that the given element $g$ is indeed a generator of the considered group $(\GG,\star)$ or, alternatively, that the target element $t$ lies in the subgroup of $(\GG,\star)$ generated by $g$.
If the order $s=|\GG|$ of the group $(\GG,\star)$ is known then there are two natural approaches.
The straightforward approach would be to simply allow additional solutions in the form of distinct $x,y\in[s]=\{0,\ldots,s-1\}$ such that $g^x=g^y$.
By the pigeonhole principle, either $t=g^x$ for some $x\in[s]$ or there exists such a non-trivial collision $x,y\in[s]$.
The other approach would be to leverage the Lagrange theorem that guarantees that the order of any subgroup must divide the order of the group itself.
If we make the factorization of the order $s$ of the group a part of the instance then it can be efficiently tested whether $g$ is indeed a generator.

Despite being a prominent total search problem, \DLP was not extensively studied in the context of \TFNP so far.
Only recently, Sotiraki, Zampetakis, and Zirdelis~\cite{SotirakiZZ18} presented a total search problem motivated by \DLP. They showed that it lies in the complexity class \PPP and asked whether it is complete for the complexity class \PPP.

\subsection{Our Results}\label{sec:Our-Results}
In this work, we study formalizations of \DLP as a total search problem and prove new completeness results for the classes \PPP and \PWPP.

Our starting point is the discrete logarithm problem in ``general groups'' suggested in~\cite{SotirakiZZ18}.
Given the order $s\in\mathbb{Z}$, $s > 1$, we denote by $\GG=[s]=\{0,\ldots,s-1\}$ the canonical representation of a set with $s$ elements. 
Any efficiently computable binary operation on $\GG$ can be represented by a Boolean circuit $f\colon \B^l \times \B^l \to \B^l$ that evaluates the operation on binary strings of length $l = \lceil \log(s)\rceil$ representing the elements of $\GG$.
Specifically, the corresponding binary operation $\star$ on $\GG$ can be computed by first taking the binary representation of the elements $x,y\in\GG$, evaluating $f$ on the resulting strings, and mapping the value back to $\GG$.
Note that the binary operation $\star$ induced on $\GG$ by $f$ in this way might not satisfy the group axioms and, thus, we refer to $(\GG,\star)$ as the induced \emph{groupoid} adopting the terminology for a set with a binary operation common in universal algebra.

Assuming that $(\GG,\star)$ is a cyclic group, we might be provided with the representations of the identity element $id\in\GG$ and a generator $g\in\GG$, which, in particular, enable us to efficiently access the group elements
via an indexing function $\ind\colon [s] \to \GG$ computed as the corresponding powers of $g$ (e.g. via repeated squaring).
An instance of a general \DLP is then given by a representation $(s,f)$ inducing a groupoid $(\GG,\star)$ together with the identity element $id\in\GG$, a generator $g\in\GG$, and a target $t\in\GG$; a solution for the instance $(s,f,id,g,t)$ is either an index $x\in[s]$ such that $\ind(x)=t$ or a pair of distinct indices $x, y\in[s]$ such that $\ind(x)=\ind(y)$.
Note that the solutions corresponding to non-trivial collisions in $\ind$ ensure totality of the instance irrespective of whether the induced groupoid $(\GG,\star)$ satisfies the group axioms -- the indexing function $\I$ either has a collision or it is a bijection and must have a preimage for any $t$.

The general \DLP as defined above can clearly solve \DLP in specific groups with efficient representation such as any multiplicative group $\mathbb{Z}_p^*$ of integers modulo a prime $p$, which are common in cryptographic applications.
On the other hand, it allows for remarkably unstructured instances and the connection to \DLP is rather loose --
as we noted above, the general groupoid $(\GG,\star)$ induced by the instance might not be a group, let alone cyclic.
Therefore, we refer to this search problem as \Index (see~\Cref{def:index} in~\Cref{sec:Index-PPP} for the formal definition).

A priori, the exact computational complexity of \Index is unclear.
\cite{SotirakiZZ18} showed that it lies in the class \PPP by giving a reduction to the \PPP-complete problem \Pigeon, where one is asked to find a preimage of the $0^n$ string or a non-trivial collision for a function from $\B^n$ to $\B^n$ computed by a Boolean circuit given as an input.
No other upper or lower bound on \Index was shown in~\cite{SotirakiZZ18}.
Given that \DLP can be used to construct collision-resistant hash functions~\cite{Damgard87}, it seems natural to ask whether \Index lies also in the class \PWPP, a subclass of \PPP defined by the canonical problem \Collision, where one is asked to find a collision in a shrinking function computed by a Boolean circuit given as an input.

However, a closer look at the known constructions of collision-resistant hash functions from \DLP reveals that they crucially rely on the homomorphic properties of the function $g^x=\I(x)$.
Given that $(\GG,\star)$ induced by an arbitrary instance of \Index does not necessarily posses the structure of a cyclic group, the induced indexing function $\I$ is not guaranteed to have any homomorphic properties and it seems unlikely that \Index could be reduced to any \PWPP-complete problem such as \Collision.
In \Cref{sec:Index-PPP}, we establish that the above intuition about the lack of structure is indeed correct:

\paragraph{\Cref{thm:Index_is_PPP_complete}.}\emph{\Index is \PPP-complete.}
\smallskip

On the other hand, we show that, by introducing additional types of solutions in the \Index problem, we can enforce sufficient structure on the induced groupoid $(\GG,\star)$ that allows for a reduction to the \PWPP-complete problem \Collision.
First, we add a solution type witnessing that the coset of $t$ is not the whole $\GG$,  i.e., that $\{t\star a\mid a\in\GG\}\neq\GG$, which cannot be the case in a group.
Specifically, a solution is also any pair of distinct $x,y\in[s]$ such that $t\star\I(x)=t\star\I(y)$.
Second, we add a solution enforcing some form of homomorphism in $\I$ with respect to $t$.
Specifically, a solution is also any pair of $x,y\in[s]$ such that $\I(x)=t\star\I(y)$ and $\I(x-y \mod s)\neq t$.
The second type of solution is motivated by the classical construction of a collision-resistant hash function from \DLP by Damg\aa{}rd~\cite{Damgard87}.
Notice that if there are no solutions of the second type then any pair $x,y$ such that $\I(x)=t\star\I(y)$ gives rise to the preimage of $t$ under $\I$ by simply computing $x-y \mod s$.
We refer to the version of $\Index$ with the additional two types of solutions as \DLog (see~\Cref{def:problem_DLOG} in~\Cref{sec:DLog-PWPP} for the formal definition), as it is in our opinion closer to the standard \DLP in cyclic groups compared to the significantly less structured \Index.\footnote{To clarify our terminology, note that the problem $\mathsf{DLOG}$ from Sotiraki et al.~\cite{SotirakiZZ18} is actually a variant of our \Index.}

Since \DLog is a relaxation of \Index obtained by allowing additional types of solutions, it could be the case that we managed to reduce \DLog to \Collision simply because \DLog is trivial.
Note that this is not the case since \DLog is at least as hard as \DLP in any cyclic group with an efficient representation, where \DLP would naturally give rise to an instance of \DLog with a unique solution corresponding to the solution for the \DLP.
In \Cref{sec:DLog-PWPP}, we establish that \DLog is at least as hard as the problem of finding a non-trivial collision in a shrinking function:

\paragraph{\Cref{thm:DLog_is_PWPP_complete}.}\emph{\DLog is \PWPP-complete.}

\paragraph{Alternative characterizations of \PWPP.} Our \PWPP-completeness result for \DLog is established via a series of reductions between multiple intermediate problems, which are thus also \PWPP-complete.
We believe this characterization will prove useful in establishing further \PWPP-completeness results.
These new \PWPP-complete problems are defined in~\Cref{sec:DLog-PWPP} and an additional discussion is provided in~\Cref{sec:New-Characterizations-of-PWPP}.

\paragraph{Implications for cryptographic lower bounds for subclasses of \TFNP.}
It was shown already by Papadimitriou~\cite{Papadimitriou94} that cryptographic hardness might serve as basis for arguing the existence of average-case hardness in subclasses of \TFNP.
A recent line of work attempts to show such cryptographic lower bounds for subclasses of \TFNP under increasingly more plausible cryptographic hardness assumptions~\cite{Jerabek16,BitanskyPR15,GargPS16,HubacekY20,HubacekNY17,KomargodskiS20,ChoudhuriHKPRR19,ChoudhuriHKPRR19a,EphraimFKP20a,BitanskyG20,LombardiV20,HubacekKKS20,JawaleKKZ21}.
However, it remains an open problem whether \DLP can give rise to average-case hardness in subclasses of \TFNP other than \PWPP and \PPP.
Our results highlight that any attempt at basing average-case hardness in subclasses of \TFNP (other than \PWPP and \PPP) on the average-case hardness of the discrete logarithm problem must exploit its structural properties beyond what is necessary for constructions of collision-resistant hash functions.

\paragraph{Witnessing totality of number theoretic problems.}
In~\Cref{sec:Totality-Number-Theory}, we discuss some of the issues that arise when defining  total search problems corresponding to actual problems in computational number theory.
First, we highlight some crucial distinctions between the general \DLog as defined in~\Cref{def:problem_DLOG} and the discrete logarithm problem in multiplicative groups $\Z_p^*$.
In particular, we argue that the latter is unlikely to be \PWPP-complete.

Second, we clarify the extent to which our reductions exploit the expressiveness allowed by the representations of instances of \DLog and \Index.
In particular, both the reduction from \Collision to \DLog and from \Pigeon to \Index output instances that induce groupoids unlikely to satisfy group axioms and, therefore, do not really correspond to \DLP.
Additionally, we revisit the problem \Blichfeldt introduced in~\cite{SotirakiZZ18} and show that it also exhibits a similar phenomenon in the context of computational problems on integer lattices.

\section{Preliminaries}\label{sec:Preliminaries}
We denote by $[m]$ the set $\{0,1,\dots,m-1\}$, by $\mathbb{Z}^+$ the set $\{1,2,3,\dots\}$ of positive integers, and by $\mathbb{Z}_0^+$ the set $\{0,1,2,\dots\}$ of non-negative integers.
For two strings $u,v\in\{0,1\}^*$, $u\,||\,v$ stands for the concatenation of $u$ and $v$.
When it is clear from the context, we omit the operator $||$, e.g., we write $0x$ instead of $0\,||\,x$. 
The standard XOR function on binary strings of equal lengths is denoted by $\oplus$. 

\paragraph{Bit composition and decomposition.}
Throughout the paper, we often make use of the bit composition and bit decomposition functions between binary strings of length $k$ and the set $[2^k]$ of non-negative integers less then $2^k$.
We denote these functions $\bc^k$ and $\bd^k$. 
Concretely, $\bc^k: \{0,1\}^k \to [2^k]$ and $\bd^k: [2^k] \to \{0,1\}^k$. 
Formally, for $x=x_1x_2\dots x_k \in \{0,1\}^k$, we define $\bc^k(x)=\sum_{i=0}^{k-1} x_{k-i}2^i.$
The function $\bc^k$ is bijective and we define the function $\bd^k$ as its inverse, i.e., for $a \in [2^k]$, $\bd^k(a)$ computes the unique binary representation of $a$ with leading zeroes such that its length is $k$. When clear from the context, we omit $k$ and write simply $\bc$ and $\bd$ to improve readability. 
At places, we work with the output of $\bd^k$ without the leading zeroes.  
We denote by $\bd_0: \mathbb{Z}_0^+ \to \{0,1\}^*$ the standard function which computes the binary representation without the leading zeroes.

\paragraph{\TFNP and some of its subclasses.}
A \emph{total $\NP$ search problem} is a relation $S\subseteq\B^*\times\B^*$ such that: 1)
the decision problem whether $(x,y)\in S$ is computable in polynomial-time in $|x|+ |y|$, and 2) there exists a polynomial $q$ such that for all $x\in\B^*$, there exists a $y\in\B^*$ such that $(x,y)\in S$ and $|y| \leq q(|x|)$.
The class of all total $\NP$ search problems is denoted by $\TFNP$.
To avoid unnecessarily cumbersome phrasing, throughout the paper, we define total $\NP$ search relations implicitly by presenting the set of valid \emph{instances} $X\subseteq \B^*$ and, for each instance $i\in X$, the set of admissible \emph{solutions} $Y_i\subseteq\B^*$ for the instance $i$.
It is then implicitly assumed that, for any invalid instance $i\in\B^*\setminus X$, we define the corresponding solution set as $Y_i=\B^*$.

Let $S,T\subseteq\B^*\times\B^*$ be total search problems.
A \emph{reduction from $S$ to $T$} is a pair of polynomial-time computable functions $f, g\colon\B^*\to\B^*$ such that, for all $x,y\in\B^* $ if $(f(x),y) \in T$ then $(x,g(y))\in S$.
In case there exists a reduction from $S$ to $T$, we say that \emph{$S$ is reducible to $T$}.
The above corresponds to so-called polynomial-time many-one (or Karp) reductions among decision problems in the context of search problems.
In the rest of the paper, we consider only such reductions.

\begin{definition}[\Pigeon and \PPP~\cite{Papadimitriou94}]\label{def:problem_Pigeon}
The search problem \Pigeon is defined via the following relation of instances and solutions.
\begin{description}
    \item[Instance:] A Boolean circuit $\circuit{C}$ with $n$ inputs and $n$ outputs. 
    \item[Solution:] One of the following:
        \begin{enumerate}
            \item $u \in \{0,1\}^n$ such that $\circuit{C}(u) = 0^n$,
            \item distinct $u,v \in \{0,1\}^n$ such that $\circuit{C}(u) = \circuit{C}(v)$.
        \end{enumerate}
\end{description}
The class of all total search problems reducible to \Pigeon is called \PPP.
\end{definition}

\begin{definition}[\Collision and \PWPP~\cite{Jerabek16}]\label{def:problem_Collision}
The search problem \Collision is defined via the following relation of instances and solutions.
\begin{description}
    \item[Instance:] A Boolean circuit $\circuit{C}$ with $n$ inputs and $m$ outputs with $m < n$.
    \item[Solution:] Distinct $u,v \in \{0,1\}^n$ such that $\circuit{C}(u) = \circuit{C}(v)$.
\end{description}
The class of all total search problems reducible to \Collision is called \PWPP.
\end{definition}

\section{\DLog is \PWPP-complete}\label{sec:DLog-PWPP}

In this section, we define \DLog, a total search problem associated to \DLP and show that it is \PWPP-complete. 
Our reductions give rise to additional new \PWPP-complete problems \Dove and \Claw, which we discuss further  in~\Cref{sec:New-Characterizations-of-PWPP}.

Similarly to Sotiraki et al.~\cite{SotirakiZZ18}, we represent a binary operation on $\GG=[s]=\{0,\ldots,s-1\}$ by a Boolean circuit $f\colon \{0,1\}^{l}\times \{0,1\}^{l} \to \{0,1\}^l$, where $l = \lceil \log(s)\rceil.$
Given such a representation $(s,f)$, we define a binary operator $f_\GG: [s] \times [s] \to [2^l]$ for all $x,y\in[s]$ as
$
f_\GG(x,y) = \bc(f(\bd(x), \bd(y)))
$
using the bit composition (resp. decomposition) function $\bc$ (resp. $\bd$) defined in~\Cref{sec:Preliminaries}.
We denote by $(\GG,\star)$ the groupoid induced by $f$, where $\star\colon [s]\times[s]\to [s]$ is the binary operation closed on $[s]$ obtained by extending the operator $f_\GG$ in some fixed way, e.g., by defining $x\star y=1$ for all $x,y\in[s]$ such that $f_\GG(x,y)\not\in[s]$.

If the induced groupoid $(\GG,\star)$ was a cyclic group then we could find the indices of the identity element $id\in[s]$ and a generator $g\in[s]$.
Moreover, we could use $g$ to index the elements of the group $(\GG,\star)$, e.g., in the order of increasing powers of $g$, and the corresponding \emph{indexing function} $\ind\colon [s] \to [2^l]$ would on input $x$ return simply the $x$-th power of the generator $g$. 
We fix a canonical way of computing the $x$-th power using the standard square-and-multiply method as defined in~\Cref{algo:Ig}.
The algorithm first computes $(x_m, x_{m-1}, \dots, x_{1})=\bd_0(x)$, i.e., the binary representation of the exponent $x$ without the leading zeroes for some $m \leq l$, and it then proceeds with the square-and-multiply method using the circuit $f$.
As explained above, $f$ implements the binary group operation.
Hence, $f(r,r)$ corresponds to squaring the intermediate value $r$ and $f(g,r)$ corresponds to multiplication of the intermediate value $r$ by the generator $g$.

With the above notation in place, we can give the formal definition of \DLog.

\begin{algorithm}[t!]
\caption{Computation of the $x$-th power of the generator $g\in[s]$ of a groupoid $(\GG,\star)$ of size $s\in\N$ induced by $f\colon\{0,1\}^{2\lceil \log(s)\rceil}\to\{0,1\}^{\lceil \log(s)\rceil}$ with identity $id\in[s]$.}\label{algo:Ig}
\begin{algorithmic}[1]
\Procedure{$\I(x)$}{}
\State $(x_m,\dots, x_1) \gets \bd_0(x)$
\State $r\gets \bd(id)$
\State $g \gets \bd(g)$
\For{$i$ \textbf{from} $m$ \textbf{to} $1$}
\State $r\gets f(r,r)$
\If{$x_i = 1$}
\State $r \gets f(g,r)$
\EndIf
\EndFor
\State \textbf{return} $\bc(r)$
\EndProcedure
\end{algorithmic}
\end{algorithm}

\begin{definition}[\DLog]\label{def:problem_DLOG}
The search problem \DLog is defined via the following relation of instances and solutions.
\begin{description}
    \item[Instance:] A tuple $(s,f,id,g,t)$, where $s\in\Z^+$ is a size parameter such that $s\ge 2$ and $f$ is a Boolean circuit with $2\lceil \log(s)\rceil$ inputs and  $\lceil \log(s)\rceil$ outputs, i.e., $(s,f)$ represent a groupoid $(\GG, \star)$, and $id,g,t\in[s]$ are some indices of elements in $\GG$.
    \item[Solution:] One of the following:
    \begin{enumerate}
        \item $x \in [s]$ such that $\ind(x) = t$,
        \item $x,y \in [s]$ such that $f_\GG(x,y)\geq s$,
        \item distinct $x, y \in [s]$ such that $\ind(x) = \ind(y)$,
        \item distinct $x, y \in [s]$ such that $f_\GG(t, \I(x)) = f_\GG(t, \I(y))$,
        \item $x, y \in [s]$ such that $\I(x) = f_\GG(t, \I(y))$ and $\I(x-y \bmod s) \neq t$.
    \end{enumerate}
\end{description}
\end{definition}

The first type  of a solution in \DLog corresponds to the discrete logarithm of $t$. 
Since we cannot efficiently verify that the input instance represents a group with the purported generator $g$, additional types of solutions had to be added in order to guarantee that \DLog is total.
Note that any solution of these additional types witnesses that the instance does not induce a group, since for a valid group these types cannot happen.
Nevertheless, the first three types of solutions are sufficient to guarantee the totality of \DLog.
The last two types of solutions make \DLog to lie in the class \PWPP and are crucial for correctness of our reduction from \DLog to \Collision presented in \Cref{sec:DLog-In-PWPP}.
At the end of this section, we provide further discussion of \DLog and some of its alternative definitions.

In~\Cref{sec:DLog-is-PWPP-hard}, we show that \DLog is \PWPP-hard.
In~\Cref{sec:DLog-In-PWPP}, we show that \DLog lies in \PWPP.
Therefore, we prove \Cref{thm:DLog_is_PWPP_complete}.

\begin{theorem}\label{thm:DLog_is_PWPP_complete}
\DLog is \PWPP-complete. 
\end{theorem}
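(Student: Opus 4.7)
The plan is to prove the theorem by establishing its two directions separately: containment in \PWPP (a reduction $\DLog\le\Collision$) and \PWPP-hardness (a reduction $\Collision\le\DLog$). For the containment direction, I would adapt Damg\aa{}rd's classical hash construction, whose reverse-engineering already motivated the addition of solution type~5 in \Cref{def:problem_DLOG}. Given an instance $(s,f,id,g,t)$ with $l=\lceil\log s\rceil$, I would build a Boolean circuit $C\colon\{0,1\}^{l+1}\to\{0,1\}^l$ that, on input $(b,y)$ with $\bc(y)\in[s]$, outputs $\bd(\I(\bc(y)))$ when $b=0$ and $\bd(f_\GG(t,\I(\bc(y))))$ when $b=1$; inputs $(b,y)$ with $\bc(y)\ge s$ are sent to a canonical tag chosen to avoid uninformative collisions. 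Since $l+1>l$ the circuit is strictly shrinking, so any solution of the resulting \Collision instance is a genuine collision. A same-$b$ collision directly yields a solution of type~3 (for $b=0$) or type~4 (for $b=1$). A cross-$b$ collision $(0,y_1),(1,y_2)$ satisfies $\I(y_1)=f_\GG(t,\I(y_2))$: if $\I(y_1-y_2\bmod s)=t$ we obtain a type-1 solution (the discrete logarithm), and otherwise a type-5 solution.

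For \PWPP-hardness I would proceed through the two intermediate \PWPP-complete problems announced in the introduction, establishing a chain
\[
\Collision \;\le\; \Dove \;\le\; \Claw \;\le\; \DLog.
\]
The first step should be nearly syntactic once \Dove is formalized as the natural relaxation of \Pigeon in which the explicit shrinking hypothesis is dropped in favor of a weaker combinatorial guarantee of a solution. The reduction $\Dove\le\Claw$ then splits the circuit of a \Dove instance into two halves $\claw{0},\claw{1}$, converting the implicit pigeonhole structure into a claw-finding task between the two maps. The culminating reduction $\Claw\le\DLog$ embeds the pair $\claw{0},\claw{1}$ into the groupoid arithmetic induced by $f$: the generator $g$ and its powers under $\I$ simulate iterated applications of these maps, while the target $t$ is positioned so that reaching it via a type-1 solution corresponds to finding a claw, with the extra solution types playing the role of witnesses of non-group-like behaviour that itself encodes a claw.

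The main obstacle I anticipate is precisely this last reduction $\Claw\le\DLog$. The groupoid $(\GG,\star)$ constructed from a \Claw instance will not be a genuine cyclic group, so the design of $f$ must ensure that \emph{every} \DLog solution of type~2, 3, 4, or 5 that a solver could return is back-translatable into a claw. Type-5 solutions are the subtlest, since they are exactly the failures of the homomorphism property that the containment reduction exploits to extract discrete logarithms; they must either be precluded by the combinatorial structure of $f$ or carry enough information to recover a claw. Balancing these constraints while keeping $f$ polynomial-size is where the real work lies, and this step also determines the precise shape chosen for the intermediate problems \Dove and \Claw.
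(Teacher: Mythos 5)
Your overall plan — prove containment and hardness separately, motivate the design of $\DLog$ via Damg\aa{}rd's construction, and route hardness through intermediate problems named $\Dove$ and $\Claw$ — points in the right direction, but both halves have concrete gaps.

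\paragraph{Containment.}
Your one-step circuit $C\colon\B^{l+1}\to\B^l$ fails whenever $s<2^l$, i.e., whenever $s$ is not a power of two. There are then $2(2^l-s)\geq 2$ out-of-range inputs $(b,y)$ with $\bc(y)\geq s$, and any way of mapping them into a range of only $2^l$ strings without new assumptions forces either many-to-one clumping (the ``canonical tag'' creates spurious collisions among out-of-range inputs) or overlap with the image of the in-range computation, which you cannot control since $\I$ may hit any value in $[2^l]$. No waving of the tag fixes this: there are strictly more out-of-range inputs than there are strings with $\bc\geq s$, so you cannot even inject them into the ``unused'' part of the codomain. The paper sidesteps this entirely by using the actual iterated Damg\aa{}rd construction: it first reduces $\DLog$ to an intermediate problem $\GeneralClaw$ (a variant of $\Claw$ whose solution types 4 and 5 witness that one of $\claw{0},\claw{1}$ escapes $[s]$), and then reduces $\GeneralClaw$ to $\Collision$ via the composed circuit $C(x)=\claw{x_0}\circ\cdots\circ\claw{x_n}(0^n)$. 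Starting from $0^n\in[s]$ and composing $n+1$ steps gives a strictly shrinking circuit whose every collision can be traced step by step either to a claw/collision or to the first step at which the computation leaves $[s]$, which yields a $\GeneralClaw$ solution. This pump-by-iteration is essential; a single-step circuit does not have enough codomain room when $s$ is not a power of two.

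\paragraph{Hardness.}
You propose $\Collision\le\Dove\le\Claw\le\DLog$, but this is the wrong decomposition and, I believe, one that cannot be completed. In the paper, $\Claw$ and $\GeneralClaw$ sit on the \emph{upper-bound} side ($\DLog\le\GeneralClaw\le\Collision$); the hardness chain goes directly $\Collision\le\Dove\le\DLog$. The detour $\Dove\le\Claw$ loses exactly the solution types your final step will need. $\Dove$'s extra solution types are preimages of $0^n$ and $0^{n-1}1$ and a distinction-intractability witness; $\Claw$ offers only claws and collisions in one of the two maps, with no preimage-type solutions at all. In the paper's reduction $\Dove\le\DLog$ (Lemma~\ref{lemma:Dove_To_DLog}), a $\DLog$ type-1 solution (a preimage of $t$ under $\I$) is converted into a preimage of $0^{n-1}1$ under $C$, and the boundary subcases of a collision (when the running value hits $g$, or when $x=0$, or when $\bd_0(x)$ is a proper extension of $\bd_0(y)$) are converted into preimages of $0^n$ or $0^{n-1}1$. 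None of these cases has a counterpart in $\Claw$, so a reduction $\Claw\le\DLog$ along the lines you sketch would have no way to absorb them. You flag this last step as your main uncertainty, and rightly so: the difficulty you anticipate is real, and the cure is to skip $\Claw$ entirely and reduce from $\Dove$ to $\DLog$ directly, choosing $f$ so that $\fo(r)=C(r)$, $\fl(r)=C(r\oplus 0^{n-1}1)$ (the extra XOR prevents the trivial Hamming-weight collisions you would otherwise get), with $id=1$, $g=0$, $t=1$, and then performing a careful case analysis on the square-and-multiply computation tree.
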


\paragraph{Alternative types of violations in \DLog.}
Since the last type of solution in \DLog implies that the associative property does not hold for the elements $t, \ind(x)$, and $\ind(y)$, one could think about changing the last type of solution to finding $x,y,z \in [s]$ such that $f_\GG(x, f_\GG(y,z)) \neq f_\GG(f_\GG(x,y),z)$ to capture violations of the associative property directly.
However, our proof of \PWPP-hardness (\Cref{sec:DLog-is-PWPP-hard}) would fail for such alternative version of \DLog and we do not see an alternative way of reducing to it from the \PWPP-complete problem \Collision.
In more detail, any reduction from \Collision to \DLog must somehow embed the instance $\circuit{C}$ of \Collision in the circuit $f$ in the constructed instance of \DLog.
However, a refutation of the associative property of the form $f(x, f(y, z)) \neq f(f(x, y), z)$ for some $x,y,$ and $z$  might simply correspond to a trivial statement $C(u) \neq C(v)$ for some $u\neq v$, which is unrelated to any non-trivial collision in $\circuit{C}$.

\paragraph{Explicit $\I$.}
A natural question about our definition of \DLog is whether its computational complexity changes if the instance additionally contains an explicit circuit computing the indexing function $\I$.
First, the indexing function $\I$ could then be independent of the group operation $f$ and, thus, the reduction from \Collision to such variant of \DLog would become trivial by defining the indexing function $\I$ directly via the \Collision instance $\circuit{C}$.
On the other hand, the core ideas of the reduction from \DLog to \Collision would remain mostly unchanged as it would have to capture also $\I$ computed by \Cref{algo:Ig}.
Nevertheless, we believe that our version of \DLog with an implicit $\I$ computed by the standard square-and-multiply algorithm strikes the right balance in terms of modeling an interesting problem.
The fact that it is more structured than the alternative with an explicit $\I$ makes it significantly less artificial and relevant to the discrete logarithm problem, which is manifested especially in the non-trivial reduction from \DLog to \Collision in \Cref{sec:DLog-In-PWPP}.

\subsection{\DLog is \PWPP-hard}\label{sec:DLog-is-PWPP-hard}

To show that \DLog is \PWPP-hard, we reduce to it from the \PWPP-complete problem \Collision (see~\Cref{def:problem_Collision}).
Given an instance $\circuit{C}\colon\B^n\to\B^{n-1}$ of \Collision, our reduction to \DLog defines a representation $(s,f)$ of a groupoid $(\GG,\star)$ and the elements $id,g$, and $t$ such that we are able to extract some useful information about \circuit{C} from any non-trivial collision $\I(x)=\I(y)$ in the indexing function $\I$ computed by~\Cref{algo:Ig}.
The main obstacle that we need to circumvent is that, even though the computation performed by $\I$ employs the circuit $f$ representing the binary operation in the groupoid, it has a very restricted form.
In particular, we need to somehow define $f$ using $\circuit{C}$ so that there are no collisions in $\ind$ unrelated to solutions of the instance of \Collision.
To sidestep some of the potential issues when handling an arbitrary instance of \Collision, we reduce to \DLog from an intermediate problem we call \Dove.


\begin{definition}[\Dove]\label{def:problem_Dove}
The search problem \Dove is defined via the following relation of instances and solutions.
\begin{description}
    \item[Instance:] A Boolean circuit $\circuit{C}$ with $n$ inputs and $n$ outputs.
    \item[Solution:] One of the following:
    \begin{enumerate}
        \item $u \in \{0,1\}^n$ such that $\circuit{C}(u) = 0^n$,
        \item $u \in \{0,1\}^n$ such that $\circuit{C}(u) = 0^{n-1}1,$
        \item distinct $u,v \in \{0,1\}^n$ such that $\circuit{C}(u) = \circuit{C}(v)$,
        \item distinct $u,v \in \{0,1\}^n$ such that $\circuit{C}(u) = C(v) \oplus 0^{n-1}1$.
    \end{enumerate}
\end{description}
\end{definition}

It is immediate that \Dove is a relaxation of \Pigeon (cf.~\Cref{def:problem_Pigeon}) with two additional new types of solutions -- the cases 2 and 4 in the above definition.
Similarly to case 1, case 2 corresponds to a preimage of a fixed element in the range.
Case 4 corresponds to a pair of strings such that their images under $\circuit{C}$ differ only on the last bit.
Permutations for which it is computationally infeasible to find inputs with evaluations differing only on a prescribed index appeared in the work of Zheng, Matsumoto, and Imai~\cite{ZhengMI90} under the term \emph{distinction-intractable} permutations.
Zheng~et~al.~showed that distinction-intractability is sufficient for collision-resistant hashing.
Note that we employ distinction-intractability in a different way than \cite{ZhengMI90}.
In particular, their construction of collision-resistant hash from distinction-intractable permutations could be leveraged towards a reduction from \Dove to \Collision (proving \Dove is contained in \PWPP)~-- we use \Dove as an intermediate problem when reducing from \Collision to \DLog (proving \PWPP-hardness of \DLog).
In the overview of the reduction from \Dove to \DLog below, we explain why distinction-intractability seems as a natural choice for our definition of \Dove.

\paragraph{Reducing \Dove to \DLog.} Let $\circuit{C}: \{0,1\}^n \to \{0,1\}^n$ be an arbitrary instance of \Dove. 
Our goal is to construct an instance $G=(s,f,id,g,t)$ of \DLog such that any solution to $G$ provides a solution to the original instance \circuit{C} of \Dove.  
The key step in the construction of $G$ is a suitable choice of the circuit $f$ since it defines both $\I$ and $f_\GG$. 
Our initial observation is that, by the definition of $\I$ (\Cref{algo:Ig}), the circuit $f$ is only applied on specific types of inputs during the computation of $\I(x)$.
Specifically:
\begin{itemize}
    \item In each loop, $f(r,r)$ is computed for some $r\in\B^*$. We denote $f$ restricted to this type of inputs by $\fo$,~\ie $\fo(r)=f(r,r)$.
    \item If the corresponding bit of $x$ is one then $f(g,r)$ is computed with fixed $g\in\B^*$ and some $r\in\B^*$.
    We denote $f$ restricted to this type of inputs by $\fl$,~\ie $\fl(r)=f(g,r)$.
\end{itemize}
Hence, using the above notation, the computation of $\I(x)$ simply corresponds to an iterated composition of the functions $\fo$ and $\fl$ depending on the binary representation of $x$ evaluated on $id$ (e.g., $\I(\bc(101))=\fl\circ \fo\circ \fo\circ \fl\circ \fo(\bd(id))$).
Exploiting the observed structure of the computation of $\I$, our approach is to define $\fo$ and $\fl$ (i.e., the corresponding part of $f$) using the circuit \circuit{C} so that we can extract some useful information about  \circuit{C} from any non-trivial collision $\I(x)=\I(y)$ (\ie from a solution to \DLog, case 3).

The straightforward option is to set $\fo(r) = \fl(r) = \circuit{C}(r)$ for all $r\in\B^n$.
Unfortunately, such an approach fails since for all distinct $u,v\in\B^n$ with Hamming weight $l$, there would be an easy to find non-trivial collision $x=\bc(u)$ and $y=\bc(v)$ of the form $\I(x) = \bc(\circuit{C}^{n+l}(id))  = \I(y)$, which might not provide any useful information about the circuit \circuit{C}.
Hence, we define $\fo$ and $\fl$ such that $\fo \neq \fl$. 

On a high level, we set $\fo(r)=\circuit{C}(r)$ and $\fl(r)=C(h(r))$ for some function $h\colon \B^n\to\B^n$ that is not the identity as in the flawed attempt above.
Then, except for some special case, a non-trivial collision $\I(x)=\I(y)$ corresponds to the identity
$
\circuit{C}(\circuit{C}(u))=\circuit{C}(h(\circuit{C}(v)))
$
for some $u,v\in\B^n$, which are not necessarily distinct.
In particular, if $\circuit{C}(u) \neq h(\circuit{C}(v))$ then the pair of strings $\circuit{C}(u), h(\circuit{C}(v))$ forms a non-trivial collision for \circuit{C}.
Otherwise, we found a pair $u,v$ such that $\circuit{C}(u)=h(\circuit{C}(v))$ that, 
for the choice $h(y) = y \oplus 0^{n-1}1$, translates into 
$
\circuit{C}(u) = \circuit{C}(v) \oplus 0^{n-1}1,
$
i.e., a pair of inputs breaking distinction-intractability of $\circuit{C}$, and corresponds to the fourth type of solution in \Dove.
Finally, the second type of solution in \Dove captures the special case when there is no pair $u,v$ such that
$\circuit{C}(\circuit{C}(u))=\circuit{C}(h(\circuit{C}(v)))$.

The formal reduction from \Dove to \DLog establishing \Cref{lemma:Dove_To_DLog} is provided below.

\begin{lemma}\label{lemma:Dove_To_DLog}
\Dove is reducible to \DLog.
\end{lemma}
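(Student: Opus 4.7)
The plan is to implement the sketch already given in the excerpt. From an instance $\circuit{C}\colon\B^n\to\B^n$ of \Dove, I would build a \DLog instance $(s,f,id,g,t)$ with groupoid size $s = 2^n$ and a circuit $f$ specified by $\fo(r) := f(r,r) = \circuit{C}(r)$ on diagonal inputs, $\fl(r) := f(g,r) = \circuit{C}(h(r))$ on inputs prefixed by the fixed generator (where $h(r) = r \oplus 0^{n-1}1$), and an arbitrary default $n$-bit value on all other input pairs. Since $f$ always produces an $n$-bit string and $s = 2^n$, the induced operator $f_\GG$ is always in range, so \DLog solutions of type 2 cannot arise. I would fix specific constants for $id$, $g$, and $t$ so that the square-and-multiply computation $\I(x)$ traces iterated compositions of $\fo$ and $\fl$ applied to $id$, with the diagonal-versus-generator rules of $f$ never both firing at a state actually visited, and so that a type 1 solution $\I(x)=t$ immediately yields a preimage of a designated string (either $0^n$ or $0^{n-1}1$) under $\circuit{C}$, matching \Dove cases 1 or 2.

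The core case is solution type 3, a nontrivial collision $\I(x)=\I(y)$. Because each of $\I(x)$ and $\I(y)$ terminates in an outermost $\circuit{C}$-application, peeling back one or two steps of both trajectories rewrites the collision, in the main subcase, as $\circuit{C}(\circuit{C}(u)) = \circuit{C}(h(\circuit{C}(v)))$ for explicitly computable intermediate states $u,v$. If the two inner arguments $\circuit{C}(u)$ and $h(\circuit{C}(v))$ differ then this is a nontrivial collision for $\circuit{C}$, a \Dove solution of type 3; otherwise $\circuit{C}(u) = h(\circuit{C}(v))$ rewrites as $\circuit{C}(u) = \circuit{C}(v) \oplus 0^{n-1}1$, a distinction-intractability pair and thus a \Dove solution of type 4. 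The ``special case'' in which this extraction degenerates is handled by recursing one step further inward along the common prefix of $\bd_0(x)$ and $\bd_0(y)$; since $x\neq y$, the binary expansions must diverge at some position, and at the first divergence one trajectory applies $\fo$ while the other applies $\fl$, which again produces one of the two \Dove violations. Solution types 4 and 5 of \DLog are reduced to this type 3 analysis by exploiting the choice of $t$: under our definition $f_\GG(t,\cdot)$ is itself of the form $\circuit{C}\circ h$ (or similar), so the same peel-back argument applies, with the extra hypothesis $\I(x-y\bmod s)\neq t$ in type 5 ensuring that the extracted pair is not degenerate.

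The main obstacle will be the careful bookkeeping across all case distinctions: trajectories of $\I(x)$ and $\I(y)$ may have different depths when $\bd_0(x)$ and $\bd_0(y)$ have different lengths, the last operation of each trajectory depends on the least significant bit of the corresponding exponent, and the recursion in the special case must provably terminate. Both the depth mismatch and the termination are handled by inducting on the length of the common suffix of the two computations aligned at their final steps; the induction bottoms out at the first position where the executed operations disagree, which by construction produces a \Dove-type violation of $\circuit{C}$. The forward map sending $\circuit{C}$ to $(s,f,id,g,t)$ and the backward map extracting a \Dove solution from a \DLog solution are both computable in polynomial time in $|\circuit{C}|$, completing the reduction.
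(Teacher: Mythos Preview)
Your approach is essentially the paper's: the same choice $f_0=\circuit{C}$ and $f_1=\circuit{C}\circ h$ with $h(r)=r\oplus 0^{n-1}1$, the same peel-back analysis for type-3 collisions, and the same strategy of routing types 4 and 5 back through $\circuit{C}$.

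One point deserves care, though: the ``arbitrary default'' for $f$ on non-diagonal, non-generator inputs is not actually free. The paper fixes $g=0$, $id=t=1$, and takes $f(x,y)=x\oplus y$ in the default branch. Since $t\neq g$, evaluating $f_\GG(t,\cdot)$ in types 4 and 5 lands in this XOR branch, which immediately yields $\I(x)=\I(y)$ (type 4, reducing to type 3) or $\circuit{C}(r)=\circuit{C}(r')\oplus 0^{n-1}1$ (type 5, giving \Dove case 4 directly---the hypothesis $\I(x-y)\neq t$ is in fact never invoked). Your phrasing ``$f_\GG(t,\cdot)$ is itself of the form $\circuit{C}\circ h$'' instead suggests taking $t=g$; that can also be made to work, but then you must resolve the diagonal/generator ambiguity at $f(g,g)$ and chase one more layer of $\circuit{C}$ in the type-5 analysis before reaching a \Dove case-4 pair. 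Either route is fine, but the default branch must be \emph{chosen}, not left arbitrary: with a constant default and $t\neq g$, type-4 solutions become trivially producible yet carry no information about $\circuit{C}$, and your backward map would fail on them.
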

\begin{proof}
Let $\circuit{C}\colon \B^n \to \B^n$ be an arbitrary instance of \Dove.
We construct the corresponding instance $G=(s,f,id,g,t)$ of \DLog.
Set $s = 2^n, g = 0, id = 1, t = 1$ and define the circuit $f: \{0,1\}^{2n} \to \{0,1\}^n$ as follows:
\[
    f(x,y) = 
\begin{cases}
\circuit{C}(x) &\mbox{if } x=y,\\
\circuit{C}(y \oplus 1) &\mbox{if } x=g \text{ and } y \neq g,\\
x \oplus y &\mbox{otherwise, }\\
\end{cases}
\]
where $x,y \in \{0,1\}^n$. Then the general group representation $(s,g,id, f)$ with the target $t$ form an instance of \DLog problem. 
We emphasize that we can access all intermediate results in the computation of $\I$ since the whole computation is performed in polynomial time in the size of the input instance \circuit{C}. 

Now we show that any solution to this \DLog instance gives a solution to the original \Dove instance \circuit{C}. Five cases can occur: 
\begin{enumerate}
    \item The solution is $x \in [s]$ such that $\ind(x) = t.$ For our \DLog instance, $t = 1$, hence $\ind(x) = 1$. From the definition of the function $\ind$, it holds that $\bd(\ind(x)) = f(r,r) = \circuit{C}(r)$ or $\bd(\ind(x)) = f(g,r) = \circuit{C}(r\oplus 0^{n-1}1)$ for some $r \in \{0,1\}^n$. Putting these equalities together, we get that
    \[
    0^{n-1}1 = \bd(1) = \bd(\ind(x)) = \circuit{C}(y), 
    \]
    where $y=r$ or $y=r\oplus 0^{n-1}1$. So this $y \in \{0,1\}^n$ is a preimage of $0^{n-1}1$ in \circuit{C}, i.e., it is a solution to the original \Dove instance \circuit{C}, case 2. 
    
    \item The solution is a pair $x,y \in [s]$ such that $f_\GG(x,y) \geq s$. 
    But since $s = 2^n$ and $f_\GG: [s]\times[s] \to [2^n]$, this case cannot happen. 
    
    \item The solution is a pair $x,y \in [s]$ such that $x \neq y$ and $\ind(x) = \ind(y)$. 
    First, we can assume that in the computation of $\ind(x)$ and of $\ind(y), r \neq \bd(g)$ for all iterations. 
    If that was the case, then for the first such occurrence it holds that
    \[
    0^n = \bd(0)=\bd(g)=r=
    \begin{cases}
    f(r',r')=\circuit{C}(r'),\\
    f(g,r')=\circuit{C}(r'\oplus 0^{n-1}1).\\
    \end{cases}
    \]
     In both cases, we found a preimage of $0^n$ in \circuit{C}, i.e., a solution to the original \Dove instance \circuit{C}, case 1. 
     
     Further, let $(x_k, \dots, x_0) = \bd_0(x)$ and $(y_l, \dots, y_0) = \bd_0(y)$ be the binary representations of $x$ and $y$, respectively, where $x_0$ and $y_0$ are the least significant bits and $k,l < n$. 
     We use the following notation: by $r_{z_i}$ we denote the value of variable $r$ in the computation of $\ind(z)$ after the loop corresponding to the bit $z_i$. Since $x \neq y$, their binary representations are distinct as well.

     Hence, there are three possible cases:
     \begin{enumerate}
         \item There is some $i$ such that $x_i \neq y_i$. Let $j$ denote the smallest such $i$. Without loss of generality, assume that $x_j = 0$ and $y_j = 1$. Hence, it holds that $r_{x_j}=\circuit{C}(a)$ and $r_{y_j} = \circuit{C}(\circuit{C}(b)\oplus 0^{n-1}1)$ for some $a,b \in \{0,1\}^n$. 
         \begin{enumerate}
             \item If $j=0$, then it holds that $\bc(r_{x_j}) = \ind(x) = \ind(y) = \bc(r_{y_j})$, which means that
         \begin{equation}\label{eq:gen_coll_to_dlog}
         \circuit{C}(a) = \circuit{C}(\circuit{C}(b)\oplus 0^{n-1}1).
         \end{equation}
         If $x=0$, then $a=\bd(id)=\bd(1)=0^{n-1}1$, so
         \[
         \circuit{C}(0^{n-1}1) = \circuit{C}(\circuit{C}(b)\oplus 0^{n-1}1).
         \]
         Now, either $0^{n-1}1 \neq \circuit{C}(b) \oplus 0^{n-1}1$, which means a collision in \circuit{C}, i.e., a solution to the original instance \circuit{C}, case 3, or $0^{n-1}1 = \circuit{C}(b) \oplus 0^{n-1}1$, which implies $0^n = \circuit{C}(b)$ and $b$ is a preimage of $0^n$ in \circuit{C}, i.e., a solution to the original instance \circuit{C}, case 1.  
         
         If $x \neq 0$, then $j < k$ and $a = \circuit{C}(c)$ for some $c \in \{0,1\}^n$. Substituting to the above~\cref{eq:gen_coll_to_dlog}, we get that 
         \[
         \circuit{C}(\circuit{C}(c)) = \circuit{C}(\circuit{C}(b)\oplus 0^{n-1}1).
         \]
         Now, either $\circuit{C}(c) \neq \circuit{C}(b) \oplus 0^{n-1}1$, which means a collision in \circuit{C}, i.e., a solution to the original instance \circuit{C}, case 3, or $\circuit{C}(c) = \circuit{C}(b) \oplus 0^{n-1}1$, so $b,c$ differ only on the last bit, i.e., they form a solution to the original instance $\circuit{C}$, case 4.  
         
         \item Now suppose that $j \neq 0$. If $r_{x_j} = r_{y_j}$, then the proof can be reduced to the previous case $j=0$. Otherwise, since $j$ is the smallest index where $x_j$ and $y_j$ differ, we know that $(x_{j-1}, \dots, x_0) = (y_{j-1}, \dots, y_0)$ and, hence, the computation of $\ind(x)$ and of $\ind(y)$ uses exactly same steps starting from $r_{x_j}, r_{y_j}.$ Since $r_{x_j} \neq r_{y_j}$ and $\ind(x) = \ind(y)$, there must be a collision after some step. Since all these steps correspond to applying the circuit $\circuit{C}$, we can find a collision in $\circuit{C}$, i.e., a solution to the original instance $\circuit{C}$, case 3. 
         \end{enumerate}
         \item It holds that $(x_l, \dots, x_0) = (y_l, \dots, y_0)$ and $k > l$. We know that $r_{x_{l+1}} = \circuit{C}(a)$ for some $a$. In the computation of $\ind$, the variable $r$ is initialized to $\bd(id)$ at the beginning. Since $(x_l, \dots, x_0) = (y_l, \dots, y_0)$, the computation of $\ind(x)$ starting from $r_{x_{l+1}}$ uses the same steps as the whole computation of $\ind(y)$, which starts from $r=\bd(id) = \bd(1) = 0^{n-1}1$. If $0^{n-1}1 = r_{x_{l+1}} = \circuit{C}(a)$, then $a$ is a preimage of $0^{n-1}1$ in \circuit{C}, i.e., a solution to the original instance \circuit{C}, case 2. If $0^{n-1}1 \neq r_{x_{l+1}}$, then there must be a collision after some step since $\ind(x) = \ind(y)$. Since all these steps correspond to applying the circuit $\circuit{C}$, we can find a collision in \circuit{C}, i.e., a solution to the original instance $\circuit{C}$, case 3.  
         \item It holds that $(x_k, \dots, x_0) = (y_k,\dots, y_0)$ and $k < l$. Then the proof proceeds exactly same as for the case $k > l$ only with the roles of $x$ and $y$ switched. 
     \end{enumerate}
     
     \item The solution is a pair $x,y \in [s]$ such that $x \neq y$ and $f_\GG(t, \I(x)) = f_\GG(t, \I(y))$. If $t = \I(x)$, then we have that
     \[
    0^{n-1}1 = \bd(1)=\bd(t)=\bd(\I(x))=
    \begin{cases}
    f(r',r')=\circuit{C}(r'),\\
    f(g,r')=\circuit{C}(r'\oplus 0^{n-1}1),\\
    \end{cases}
    \]
    for some $r'$. In both cases, we found a preimage of $0^{n-1}1$ in \circuit{C}, i.e., a solution to the original instance \circuit{C}, case 2. Similarly, if $t = \I(y)$, then we found a solution to the original instance \circuit{C}, case 2. Otherwise, since $t \neq g$, it holds that $f_\GG(t, \I(x)) = \bc(\bd(t) \oplus \bd(\I(x)))$ and that $f_\GG(t, \I(y)) = \bc(\bd(t) \oplus \bd(\I(y)))$. By combining these equalities, we obtain that 
     \[
     \bc(\bd(t) \oplus \bd(\I(x))) = \bc(\bd(t) \oplus \bd(\I(y))),
     \]
     which implies that
     \[
     \I(x) = \I(y),
     \]
     and since $x \neq y$, we proceed as in the case $3.$ above.  
     
    \item The solution is a pair $x,y \in [s]$ such that 
    \begin{equation}\label{eq:gen_coll_to_dlog_2}
\I(x) = f_\GG(t, \I(y))
\end{equation}
and $\I(x-y \bmod s) \neq t$. If $t = \I(y)$, then we can proceed as in the case 1. above. Otherwise, since $t \neq g$, we have that
\begin{equation}\label{eq:gen_coll_to_dlog_3}
f_\GG(t, \I(y)) = \bc(\bd(t) \oplus \bd(\I(y))).     
\end{equation}
By combining \cref{eq:gen_coll_to_dlog_2} and \cref{eq:gen_coll_to_dlog_3}, we get that 
    \[
    \I(x) = \bc(\bd(t) \oplus \bd(\I(y))).
    \]
    Moreover, we know that $\I(x) = \bc(\circuit{C}(r))$ for some $r \in \{0,1\}^n$ and that $\I(y) = \bc(\circuit{C}(r'))$ for some $r' \in \{0,1\}^n$. Substituting to the previous relationship and using the fact the $\bc$ and $\bd$ are bijections inverse to each other, we get that 
    \[
    \circuit{C}(r) = \bd(t) \oplus \circuit{C}(r') = \bd(1) \oplus \circuit{C}(r') = 0^{n-1}1 \oplus \circuit{C}(r').    
    \]
    Hence, the strings $r, r'$ differ only on the last bit, i.e., they form a solution to the original instance $\circuit{C}$, case 4.
    \qedhere
\end{enumerate}
\end{proof}

\paragraph{\PWPP-hardness of \Dove.} Next, we show that, by introducing additional types of solutions into the definition of \Pigeon, we do not make the corresponding search problem too easy --
\Dove is at least as hard as any problem in \PWPP.
Our reduction from \Collision to \Dove is rather syntactic and natural.
In particular, it results in instances of \Dove with all solutions being of a single type, corresponding to collisions of the original instance of \Collision.

\begin{lemma}\label{lemma:Collision_To_Dove}
\Collision is reducible to \Dove.
\end{lemma}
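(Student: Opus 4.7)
The plan is to reduce a \Collision instance $\circuit{C}\colon \B^n \to \B^m$ (with $m < n$) to a \Dove instance $\circuit{C'}\colon \B^{n'} \to \B^{n'}$ in such a way that solutions of types~1, 2, and~4 of \Dove become syntactically impossible, so that every \Dove solution must be of type~3 and directly yields a collision in $\circuit{C}$. The key idea is to force a leading bit of~$1$ and a trailing bit of~$0$ on every image of $\circuit{C'}$: the leading~$1$ rules out preimages of $0^{n'}$ and $0^{n'-1}1$ (cases~1 and~2), while the trailing~$0$ rules out pairs of images differing only in the last bit (case~4).

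The main obstacle is the bit budget: if $m \le n - 2$, one could simply take $n' = n$ and set $\circuit{C'}(u) = 1 \,\|\, \circuit{C}(u) \,\|\, 0^{n - m - 1}$, but \Collision only guarantees $m \le n - 1$, leaving no room for both a leading and a trailing padding bit while keeping $\circuit{C'}$ length-preserving and preserving the collision structure. I will circumvent this by first doubling the circuit to obtain extra slack. Concretely, set $n' = 2n$ and define
\[
\circuit{C'}(u_1 \,\|\, u_2) \;=\; 1 \,\|\, \circuit{C}(u_1) \,\|\, \circuit{C}(u_2) \,\|\, 0^{2(n-m) - 1},
\]
whose output has length $1 + 2m + 2(n-m) - 1 = 2n = n'$, with a nonempty trailing zero block since $n > m$.

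The final step is to verify each \Dove solution type for $\circuit{C'}$. Cases~1 and~2 are impossible because every image starts with~$1$; case~4 is impossible because every image ends with~$0$ and therefore cannot equal any $v \oplus 0^{n'-1}1$, which ends with~$1$. A case~3 solution is a pair of distinct strings $(u_1 \,\|\, u_2) \ne (u_1' \,\|\, u_2')$ satisfying $\circuit{C}(u_i) = \circuit{C}(u_i')$ for $i \in \{1,2\}$; since the pairs differ, some coordinate $i$ has $u_i \ne u_i'$ with $\circuit{C}(u_i) = \circuit{C}(u_i')$, a non-trivial collision in $\circuit{C}$ that the back-map returns. Both the forward map from $\circuit{C}$ to $\circuit{C'}$ and the back-map on solutions are clearly polynomial-time computable, completing the reduction.
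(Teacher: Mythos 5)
Your reduction is correct and follows essentially the same strategy as the paper's: both double the input to $\B^{2n}$ via two parallel copies of $\circuit{C}$ and then pad with constant bits to rule out solution types~1, 2, and~4 of \Dove, so that any type-3 solution directly yields a collision in $\circuit{C}$. The only cosmetic difference is the placement of the padding (you use a leading~$1$ and trailing~$0$'s, the paper appends two trailing~$1$'s after first normalizing to $m=n-1$), which does not change the argument.
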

\begin{proof}
We start with an arbitrary instance $\circuit{C}: \{0 ,1\}^n \to \{0,1\}^m$ with $m < n$ of \problem{Collision}. Moreover, we can assume that $m = n-1$ because otherwise we can pad the output with zeroes, which preserves the collisions. We construct a circuit $\circuit{V}: \{0,1\}^{2n} \to \{0,1\}^{2n}$, considered as an instance of \Dove, as follows:
\[
\circuit{V}(x_1, \dots, x_{2n}) = (\circuit{C}(x_1,\dots, x_n), \circuit{C}(x_{n+1}, \dots, x_{2n}),1,1),
\]
where $x_i \in \{0,1\}$. The construction is valid since the new circuit \circuit{V} can be constructed in polynomial time with respect to the size of \circuit{C}. Now we show that any solution to the above instance \circuit{V} of \Dove gives a solution to the original \problem{Collision} instance \circuit{C}. Four cases can occur:
\begin{enumerate}
    \item The solution to \circuit{V} is $(x_1,\dots, x_{2n}) \in \{0,1\}^{2n}$ such that $\circuit{V}(x_1,\dots,x_{2n}) = 0^{2n}$.
    From the definition of the circuit \circuit{V}, the last bit of the output is always 1.
    Hence, this case cannot happen.
    
     \item The solution to \circuit{V} is $(x_1,\dots, x_{2n}) \in \{0,1\}^{2n}$ such that $\circuit{V}(x_1,\dots,x_{2n}) = 0^{n-1}1 = (0,0,\dots,0,1)$. From the definition of the circuit \circuit{V}, the next-to-last bit of the output is always 1 and, hence, this case cannot happen.  
    
    \item The solution to \circuit{V} is $x=(x_1, \dots, x_{2n}), y=(y_1, \dots, y_{2n}) \in \{0,1\}^{2n}$ such that $x\neq y$ and $\circuit{V}(x) = \circuit{V}(y)$. From the definition of the circuit \circuit{V}, it holds that
    \[
    \circuit{C}(x_1, \dots, x_n) = \circuit{C}(y_1, \dots, y_n)
    \]
    and 
    \[
    \circuit{C}(x_{n+1}, \dots, x_{2n}) = \circuit{C}(y_{n+1}, \dots, y_{2n}).
    \]
    Since $x \neq y$, either $(x_1, \dots, x_n) \neq (y_1, \dots, y_n)$ or $(x_{n+1}, \dots, x_{2n}) \neq (y_{n+1}, \dots, y_{2n})$. In both cases, we found a collision, i.e., a solution to the original instance \circuit{C}. 
    
    \item The solution to \circuit{V} is $x, y \in \{0,1\}^{2n}$ such that $\circuit{V}(x) = \circuit{V}(y) \oplus 0^{n-1}1$, i.e., their evaluations differ only on the last bit. From the definition of the circuit \circuit{V}, the last bit of the output is always 1 and, hence, this case cannot happen.\qedhere
\end{enumerate}
\end{proof}

The above \Cref{lemma:Dove_To_DLog} and \Cref{lemma:Collision_To_Dove} imply that \DLog is \PWPP-hard and we conclude this section with the corresponding corollary.
\begin{corollary}\label{cor:DLog_is_PWPP_hard}
\DLog is \PWPP-hard.
\end{corollary}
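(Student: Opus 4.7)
The plan is to obtain \Cref{cor:DLog_is_PWPP_hard} by chaining the two reductions that have just been established. Recall that \PWPP is defined as the class of total search problems reducible to \Collision, so by definition \Collision is \PWPP-hard. Hence to conclude that \DLog is \PWPP-hard it suffices to exhibit a polynomial-time many-one reduction from \Collision to \DLog and then invoke transitivity of such reductions.

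First, I would recall \Cref{lemma:Collision_To_Dove}, which supplies a reduction $(f_1,g_1)$ from \Collision to \Dove, and \Cref{lemma:Dove_To_DLog}, which supplies a reduction $(f_2,g_2)$ from \Dove to \DLog. Composing them in the standard way yields the pair $(f_2\circ f_1,\, g_1\circ g_2)$: the instance map first runs the \Collision-to-\Dove construction and then feeds the resulting circuit into the \Dove-to-\DLog construction, while the solution map first post-processes a purported \DLog solution back to a \Dove solution and then back to a \Collision solution. Both $f_2\circ f_1$ and $g_1\circ g_2$ are computable in polynomial time since the composition of polynomial-time functions is polynomial-time, so $(f_2\circ f_1,\,g_1\circ g_2)$ is a valid reduction in the sense defined in \Cref{sec:Preliminaries}.

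To verify correctness of the composition, I would spell out that if $y$ is any solution of the \DLog instance $f_2(f_1(x))$, then by the correctness of $(f_2,g_2)$ the string $g_2(y)$ is a solution of the \Dove instance $f_1(x)$, and by the correctness of $(f_1,g_1)$ the string $g_1(g_2(y))$ is a solution of the \Collision instance $x$. This is a routine bookkeeping step; no part of the argument requires opening up the internals of either reduction.

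Finally, combining the composed reduction with the fact that every problem in \PWPP reduces to \Collision gives, again by transitivity, a reduction from every \PWPP problem to \DLog, which is exactly the statement of \Cref{cor:DLog_is_PWPP_hard}. There is no genuine obstacle here beyond recording that polynomial-time many-one reductions compose; the substantive work has been done inside \Cref{lemma:Collision_To_Dove} and \Cref{lemma:Dove_To_DLog}.
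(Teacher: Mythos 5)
Your proposal is correct and matches the paper's argument exactly: the paper derives \Cref{cor:DLog_is_PWPP_hard} by composing the reduction of \Cref{lemma:Collision_To_Dove} with that of \Cref{lemma:Dove_To_DLog}, using that \Collision is \PWPP-hard by definition. You simply spell out the routine transitivity bookkeeping that the paper leaves implicit.
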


\subsection{\DLog Lies in \PWPP}\label{sec:DLog-In-PWPP}
In order to establish that \DLog lies in \PWPP, we build on the existing cryptographic literature on constructions of collision-resistant hash functions from the discrete logarithm problem.
Specifically, we mimic the classical approach by Damg\aa{}rd~\cite{Damgard87} to first construct a family of \emph{claw-free permutations} based on \DLP and then define a collision-resistant hash using the family of claw-free permutations.\footnote{
In principle, it might be possible to adapt any alternative known construction of collision-resistant hash from DLP; e.g., the one by Ishai, Kushilevitz, and Ostrovsky~\cite{IshaiKO05}, which goes through the intermediate object of \emph{homomorphic one-way commitments}.
However, this would necessitate not only the corresponding changes in the definition of \DLog but also an alternative proof of its \PWPP-hardness.
}
Recall that a family of claw-free permutations is an efficiently sampleable family of pairs of permutations such that given a ``random'' pair $\claw{0}$ and $\claw{1}$ of permutations from the family, it is computationally infeasible to find a \emph{claw} for the two permutations,~\ie inputs $u$ and $v$ such that $\claw{0}(u)=\claw{1}(v)$.
We formalize the corresponding total search problem, which we call \Claw, below.

\begin{definition}[\Claw]\label{def:problem_Claw}
The search problem \Claw is defined via the following relation of instances and solutions.
\begin{description}
    \item[Instance:] A pair of Boolean circuits $\claw{0}, \claw{1}$ with $n$ inputs and $n$ outputs.
    \item[Solution:] One of the following:
    \begin{itemize}
        \item $u,v \in \{0,1\}^n$ such that $\claw{0}(u) = \claw{1}(v)$,
        \item distinct $u,v \in \{0,1\}^n$ such that $\claw{0}(u) = \claw{0}(v)$,
        \item distinct $u,v \in \{0,1\}^n$ such that $\claw{1}(u) = \claw{1}(v)$.
    \end{itemize}
\end{description}
\end{definition}

The first type of solution in \Claw corresponds to finding a claw for the pair of functions $\claw{0}$ and $\claw{1}$.
As we cannot efficiently certify that both $\claw{0}$ and $\claw{1}$ are permutations, we introduce the second and third type of solutions which witness that one of the functions is not bijective.
In other words, the second and third type of solution ensure the totality of \Claw.

Similarly to~\cite{Damgard87}, our high-level approach when reducing from \DLog to \Collision is to first reduce from \DLog to \Claw and then from \Claw to \Collision.
Although, we cannot simply employ his analysis since we have no guarantee that 1) the groupoid induced by an arbitrary \DLog instance is a cyclic group and 2) that an arbitrary instance of \Claw corresponds to a pair of permutations. 
It turns out that the second issue is not crucial.
It was observed by Russell~\cite{Russell95} that the notion of claw-free \emph{pseudopermutations} is sufficient for collision-resistant hashing.
Our definition of \Claw corresponds exactly to the worst-case version of breaking claw-free pseudopermutations as defined by~\cite{Russell95}.
As for the first issue, we manage to provide a formal reduction from \DLog to \GeneralClaw, a variant of \Claw defined below.

\begin{definition}[\GeneralClaw]\label{def:problem_GeneralClaw}
The search problem \GeneralClaw is defined via the following relation of instances and solutions.
\begin{description}
    \item[Instance:] A pair of Boolean circuits $\claw{0}, \claw{1}$ with $n$ inputs and $n$ outputs and $s \in \mathbb{Z}^+$ such that $1 \leq s < 2^n$.
    \item[Solution:] One of the following:
    \begin{enumerate}
        \item $u,v \in \{0,1\}^n$ such that $\bc(u) < s$, $\bc(v) < s$, and  $\claw{0}(u) = \claw{1}(v)$,
        \item distinct $u,v \in \{0,1\}^n$ such that $\claw{0}(u) = \claw{0}(v)$,
        \item distinct $u,v \in \{0,1\}^n$ such that $\claw{1}(u) = \claw{1}(v)$,
        \item $u \in \{0,1\}^n$ such that $\bc(u) < s$ and $\bc(\claw{0}(u)) \geq s,$
        \item $u \in \{0,1\}^n$ such that $\bc(u) < s$ and $\bc(\claw{1}(u)) \geq s.$  
    \end{enumerate}
\end{description}
\end{definition}

The main issue that necessitates the introduction of additional types of solutions in the definition of \GeneralClaw (compared to \Claw) is that the possible solutions to an instance of \DLog are not from the whole domain $[2^n]$ but they must lie in $[s]$.

Below, we give the formal reduction from \DLog to \GeneralClaw followed by~\Cref{lemma:GeneralClaw_To_Collision} establishing that \GeneralClaw lies in \PWPP.

\begin{lemma}\label{lemma:DLOG_to_GeneralClaw}
\DLog is reducible to \GeneralClaw. 
\end{lemma}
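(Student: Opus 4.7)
The plan is to mimic Damg\aa rd's classical construction of claw-free permutations from the discrete logarithm problem. Given a \DLog instance $(s, f, id, g, t)$ with $l = \lceil \log s \rceil$, I will output the \GeneralClaw instance $(\claw{0}, \claw{1}, s)$, where the two circuits act on $\{0,1\}^n$ for the smallest $n$ with $s < 2^n$. Intuitively, for inputs in the \emph{honest} domain $\bc(u) < s$, I want $\claw{0}(u)$ to compute $g^{\bc(u)}$ and $\claw{1}(u)$ to compute $t \star g^{\bc(u)}$; formally, $\claw{0}(u) = \bd(\I(\bc(u)))$ and $\claw{1}(u) = f(\bd(t), \bd(\I(\bc(u))))$, zero-padded to length $n$ when $n > l$. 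For $u$ with $\bc(u) \geq s$ (the \emph{dishonest} region), I set $\claw{0}(u) = \claw{1}(u) = u$; this rules out spurious collisions inside the dishonest region, and when $n > l$ it also separates honest from dishonest outputs by their leading bits.

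To decode a \GeneralClaw solution into a \DLog solution I case-split on the five admissible \GeneralClaw solution types. A type-1 claw $\claw{0}(u) = \claw{1}(v)$ with $\bc(u), \bc(v) < s$ unwinds (provided $\I(\bc(u))$ and $\I(\bc(v))$ both land in $[s]$) to $\I(\bc(u)) = f_\GG(t, \I(\bc(v)))$, whence $(\bc(u) - \bc(v)) \bmod s$ either satisfies $\I(\cdot) = t$ (a \DLog solution of type 1) or does not (a \DLog solution of type 5). A type-2 or type-3 collision on distinct honest inputs translates directly into $\I(\bc(u)) = \I(\bc(v))$, a \DLog solution of type 3. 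A boundary-crossing collision, say $\bc(u) < s \leq \bc(v)$ with $\claw{0}(u) = \claw{0}(v)$, forces $\I(\bc(u)) = \bc(v) \geq s$; this, together with \GeneralClaw solutions of types 4 and 5, is handled by a single subroutine described below.

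The main work, and the reason \GeneralClaw requires extra solution types compared to \Claw, is extracting a \DLog solution of type 2 from any witness that $\I$ evaluated on some $x \in [s]$ escapes $[s]$. The plan is to trace Algorithm~\ref{algo:Ig} step by step: initially $\bc(r) = id \in [s]$, so there must be a first iteration whose update $r \gets f(r, r)$ or $r \gets f(g, r)$ drives $\bc(r)$ out of $[s]$; the value of $r$ just before that update, together with either itself or the generator $g \in [s]$, witnesses $f_\GG(\cdot, \cdot) \geq s$ with both arguments in $[s]$, i.e., a \DLog solution of type 2. The subtlety I anticipate is uniformly invoking this tracing subroutine whenever intermediate evaluations of $\I$ escape $[s]$ during the definitions of $\claw{0}$ or $\claw{1}$, so that every \GeneralClaw solution type is reduced to one of the five \DLog types without gaps. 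The resulting circuits are clearly of polynomial size in the \DLog instance.
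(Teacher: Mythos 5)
Your construction is the same as the paper's: $\claw{0}(u) = \bd(\I(\bc(u)))$ and $\claw{1}(u) = f(\bd(t), \bd(\I(\bc(u))))$ on the honest domain, with identity on the dishonest region, and the same tracing idea to recover a type-2 \DLog solution whenever an evaluation of $\I$ (or $f_\GG$) escapes $[s]$. Most of the case analysis is also right.

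There is, however, one genuine flaw in the decoding: you claim that a \GeneralClaw type-3 solution (a collision $\claw{1}(u) = \claw{1}(v)$ on distinct honest inputs with all intermediate values in $[s]$) ``translates directly into $\I(\bc(u)) = \I(\bc(v))$, a \DLog solution of type 3.'' That step requires cancelling $f(\bd(t), \cdot)$ on both sides, i.e., it implicitly assumes the map $a \mapsto t \star a$ is injective. But $(\GG,\star)$ is only a groupoid, not a group, so this cancellation is not valid, and in general the equation $f_\GG(t,\I(\bc(u))) = f_\GG(t,\I(\bc(v)))$ does \emph{not} yield $\I(\bc(u)) = \I(\bc(v))$. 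What it does yield directly is a \DLog solution of \emph{type 4}: distinct $x=\bc(u), y=\bc(v) \in [s]$ with $f_\GG(t, \I(x)) = f_\GG(t, \I(y))$. This is precisely the role of the fourth solution type in \Cref{def:problem_DLOG} --- it was added to handle exactly this non-cancellability --- and your proposal does not use it at all. The fix is to route this case to \DLog type 4 rather than type 3, but as written the argument is broken, and the omission suggests you did not see why \DLog needs that fourth solution type in the first place.

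As a minor remark, your choice of $n$ as the smallest integer with $s < 2^n$ (rather than the paper's $n = \lceil \log s\rceil$) is actually the more careful option, since \GeneralClaw requires the strict inequality $s < 2^n$ and the paper's choice violates it when $s$ is a power of two; this is a cosmetic improvement, not a conceptual difference.
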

\begin{proof}
We start with an arbitrary instance $G=(s,g,id,f,t)$ of \DLog. Let $n = \lceil \log(s) \rceil$. We define $\claw{0}: \{0,1\}^n \to \{0,1\}^n$ and $\claw{1}: \{0,1\}^n \to \{0,1\}^n$ as follows:
\[
\claw{0}(u) = 
\begin{cases}
\bd(\I(\bc(u))) &\mbox{if } \bc(u) < s,\\
u &\mbox{otherwise, }\\
\end{cases}
\]
and
\[
\claw{1}(u) = 
\begin{cases}
f(\bd(t), \bd(\I(\bc(u))) &\mbox{if } \bc(u) < s,\\
u &\mbox{otherwise, }\\
\end{cases}
\]
where $u \in \{0,1\}^n$. Now we show that any solution to this instance of \GeneralClaw given by $(\claw{0}, \claw{1}, s)$ gives a solution to the above instance G of \DLog. Five cases can occur: 
\begin{enumerate}
    \item The solution to $(\claw{0}, \claw{1}, s)$ is $u,v \in \{0,1\}^n$ such that $\bc(u) < s$, $\bc(v) < s$ and  $\claw{0}(u) = \claw{1}(v)$. Then, for $x = \bc(u), y = \bc(v)$, it holds that $x,y \in [s]$, so
    \[
    \claw{0}(u) = \bd(\I(\bc(u))) = \bd(\I(x)) 
    \]
    and 
    \[
    \claw{1}(v) = f(\bd(t), \bd(\I(\bc(v))) = \bd(f_\GG(t, \I(y))).
    \]
    Putting these equalities together, we get that 
    \[
    \bd(\I(x)) = \bd(f_\GG(t, \I(y))),
    \]
    and hence 
    \[
    \I(x) = f_\GG(t, \I(y)).
    \]
If $\I(x-y \bmod s) = t$, then $x-y \bmod s \in [s]$ is the discrete logarithm of $t$, i.e., a solution to the original instance G of \DLog, case 1. Otherwise, the pair $x,y$ is a solution to the original instance G of \DLog, case 5.  

    \item The solution to $(\claw{0}, \claw{1}, s)$ is $u,v \in \{0,1\}^n$ such that $u \neq v$ and $\claw{0}(u) = \claw{0}(v)$. Let $x = \bc(u)$, $y = \bc(v)$. If $x \geq s$, then from the definition of $\claw{0}$ and the fact that $x \neq y$, we get that $y < s$ and 
    \[
    u = \claw{0}(u) = \claw{0}(v) = \bd(\I(y)),
    \]
    so 
    \[
    x=\bc(u)=\I(y)
    \]
    with $x \geq s$ and $y \in [s]$. It means that after some step in the computation of $\I(y)$, it holds that $\bc(r) \geq s$. 
    We consider the first such step. Since all steps correspond to applying the circuit $f$, we have that 
    \[
    r = f(r', r'')
    \]
    for some $r', r''$ such that $\bc(r'), \bc(r'') \in [s]$. This rewrites to
    \[
    s \leq \bc(r) = f_\GG(\bc(r'), \bc(r'')).
    \]
    Hence, $\bc(r'), \bc(r'')$ is a solution to the original instance G of \DLog, case 2. We proceed analogously if $y \geq s$. Now assume that $x,y \in [s]$. Then we get that 
    \[
    \I(x) = \I(y)
    \]
    and since $x \neq y$, we found a solution to the original instance G of \DLog, case 3.
    
    \item The solution to $(\claw{0}, \claw{1}, s)$ is $u,v \in \{0,1\}^n$ such that $u \neq v$ and $\claw{1}(u) = \claw{1}(v)$. Let $x = \bc(u)$, $y = \bc(v)$. If $x \geq s$, then from the definition of $\claw{1}$ and the fact that $x \neq y$, we get that $y < s$ and 
    \[
    u = \claw{1}(u) = \claw{1}(v) = f(\bd(t),\bd(\I(y))),
    \]
    so 
    \[
    x = \bc(u) = \bc(f(\bd(t),\bd(\I(y)))) = f_\GG(t, \I(y))
    \]
    with $x \geq s$ and $y \in [s]$. If $\I(y) \geq s$, then we proceed as above in the previous case. If $\I(y) \in [s]$, then $t, \I(y)$ is a solution to the original instance G of \DLog, case 2. We proceed analogously if $y \geq s$. Now assume that $x,y \in [s]$. Then we get that 
    \[
    f(\bd(t), \bd(\I(x))) = f(\bd(t), \bd(\I(y))),  
    \]
    so
    \[
    f_\GG(t, \I(x)) = f_\GG(t, \I(y)) 
    \]
    and since $x \neq y$, we found a solution to the original instance G of \DLog, case 4. 
    
    \item The solution to $(\claw{0}, \claw{1}, s)$ is $u \in \{0,1\}^n$ such that $\bc(u) < s$ and $\bc(\claw{0}(u)) \geq s$. Let $x = \bc(u)$. Then we have that 
    \[
    s \leq \bc(\claw{0}(u)) = \I(x)
    \]
    with $x \in [s]$. Now we can proceed as in analogous situations in cases 2 and 3 above.   
    
    \item The solution to $(\claw{0}, \claw{1}, s)$ is $u \in \{0,1\}^n$ such that $\bc(u) < s$ and $\bc(\claw{1}(u)) \geq s$. Let $x = \bc(u)$. Then we have that 
    \[
    s \leq \bc(\claw{1}(u)) = \bc(f(\bd(t), \bd(\I(\bc(u)))) = f_\GG(t, \I(x)) 
    \]
    with $x \in [s]$. Now we can proceed as in the same situation in case 3 above.
    \qedhere
    \end{enumerate}
\end{proof}

Now, we give the formal reduction from \GeneralClaw to \Collision.

\begin{lemma}\label{lemma:GeneralClaw_To_Collision}
\GeneralClaw is reducible to \Collision. 
\end{lemma}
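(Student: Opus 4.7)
The plan is to adapt Damg\aa{}rd's classical construction of a collision-resistant hash from a claw-free pair via Merkle--Damg\aa{}rd-style iteration, using the range-violation solutions (types 4 and 5) of \GeneralClaw to accommodate the fact that $\claw{0}$ and $\claw{1}$ need not preserve the distinguished subset $[s]\subseteq[2^n]$. Given an instance $(\claw{0},\claw{1},s)$ with $\claw{0},\claw{1}\colon\B^n\to\B^n$, I construct the \Collision instance $\circuit{V}\colon\B^{n+1}\to\B^n$ that, on input $b_1b_2\cdots b_{n+1}$, initialises $r\gets 0^n$ and, for $i=n+1,n,\dots,1$, updates $r\gets\claw{b_i}(r)$, finally outputting $r$. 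The circuit has one more input bit than output bits, so it is a valid \Collision instance, and the starting value $0^n$ has $\bc(0^n)=0<s$, placing the first iterate inside the intended domain.

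To recover a \GeneralClaw solution from a collision $\circuit{V}(x)=\circuit{V}(y)$ with $x\neq y$, the inverse map replays both iterations, producing sequences $r_0,\dots,r_{n+1}$ for $x$ and $r'_0,\dots,r'_{n+1}$ for $y$, with $r_0=r'_0=0^n$ and $r_{n+1}=r'_{n+1}$. It first scans for the smallest index $i\ge 1$ with $\bc(r_i)\ge s$ (and analogously for $r'_i$): minimality combined with $\bc(r_0)=0$ forces $\bc(r_{i-1})<s$, so the string $r_{i-1}$ is a type 4 or type 5 solution according to the value of the bit $b_{n+2-i}$. If no such $i$ exists for either sequence, then every intermediate value lies in $[s]$.

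Under that assumption, the algorithm locates the largest index $j$ with $r_j\neq r'_j$. If such $j$ exists it necessarily lies in $\{1,\dots,n\}$ (because $r_0=r'_0$ and $r_{n+1}=r'_{n+1}$), so $r_{j+1}=r'_{j+1}$ yields $\claw{b}(r_j)=\claw{c}(r'_j)$ for $b=b_{n+1-j}$, $c=c_{n+1-j}$: equal bits give a non-trivial collision in $\claw{b}$ (type 2 or 3), whereas differing bits give a claw with both preimages in $[s]$ (type 1). If instead $r_i=r'_i$ for every $i$, the inequality $x\neq y$ forces some $b_{n+2-j}\neq c_{n+2-j}$, and the identity $\claw{0}(r_{j-1})=\claw{1}(r_{j-1})$ at that position is a type 1 solution because $\bc(r_{j-1})<s$ by the standing assumption.

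The main obstacle I anticipate is keeping the reverse-iteration index bookkeeping consistent through all sub-cases, and explicitly handling the degenerate situation where the two intermediate sequences agree everywhere despite $x\neq y$. Polynomial-time computability of both the forward construction of $\circuit{V}$ and the solution-recovery map is routine, as each involves at most $n+1$ evaluations of $\claw{0}$ and $\claw{1}$; together these establish the reduction from \GeneralClaw to \Collision.
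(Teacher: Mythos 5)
Your construction is the same Merkle--Damg\aa{}rd-style iteration as the paper's (the circuit $\circuit{V}$ you define is, up to re-indexing, exactly the paper's $\circuit{C}(x) = \claw{x_0}\circ\cdots\circ\claw{x_n}(0^n)$), and your recovery map -- first scanning for a range-violating intermediate to extract a type 4 or 5 solution, then locating the last index at which the two iteration sequences differ to extract a claw or an $\claw{b}$-collision -- matches the paper's case analysis in substance, merely organizing the sub-cases slightly differently. The proposal is correct and takes essentially the same approach as the paper.
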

\begin{proof}
We start with an arbitrary instance $(\claw{0}, \claw{1}, s)$ of \GeneralClaw, where $\claw{0}, \claw{1}: \{0,1\}^n \to \{0,1\}^n$. We define a circuit $\circuit{C}: \{0,1\}^{n+1} \to \{0,1\}^n$ as follows:
\[
\circuit{C}(x) = \claw{x_0} \circ \claw{x_1} \circ \dots \circ \claw{x_n} (0^n),
\]
where $x = (x_0, x_1, \dots, x_n)$. The circuit \circuit{C} can be constructed in polynomial time in the size of the given instance $(\claw{0}, \claw{1}, s)$ of \GeneralClaw. 
Now we show that any solution to this instance $\circuit{C}$ of \problem{Collision} gives a solution to the original instance $(\claw{0}, \claw{1}, s)$ of \GeneralClaw. There is only one type of solution for \problem{Collision}, so assume that $\circuit{C}(x) = \circuit{C}(y)$ for $x \neq y$, where $x = (x_0, x_1, \dots, x_n)$ and $y = (y_0, y_1, \dots, y_n)$. If it holds that
\[
\bc(\claw{x_i} \circ \dots \circ \claw{x_n}(0^n)) \geq s
\]
for some $0 \leq i \leq n$, then consider the largest such $i$. We emphasize that we can check this in polynomial time. We have that
\[
\bc(\claw{x_i}\circ\dots\circ \claw{x_n}(0^n)) \geq s
\]
and 
\[
\bc(\claw{x_{i+1}}\circ\dots\circ \claw{x_n}(0^n)) < s.
\]
Then, for $u = \claw{x_{i+1}}\circ\dots\circ \claw{x_n}(0^n)$, it holds that $\bc(u) < s$ and $\bc(\claw{x_i}(u)) \geq s$. So, $u$ forms a solution to the original instance $(\claw{0}, \claw{1}, s)$ of \GeneralClaw, case 4 or 5 based on the bit $x_i$. 
We proceed analogously if
\[
\bc(\claw{y_i} \circ \dots \circ \claw{y_n}(0^n)) \geq s
\]
for some $0 \leq i \leq n$. For the rest of the proof, we can assume that 
\[
\bc(\claw{x_i} \circ \dots \circ \claw{x_n}(0^n)) < s
\]
and 
\[
\bc(\claw{y_i} \circ \dots \circ \claw{y_n}(0^n)) < s
\]
for all $0 \leq i \leq n$. 
Since $x \neq y$, there is some $i$ such that $x_i \neq y_i$. If
\[
\claw{x_i}\circ\dots\circ \claw{x_n}(0^n) = \claw{y_i}\circ\dots\circ \claw{y_n}(0^n), 
\]
then the pair $u = \claw{x_{i+1}}\circ\dots\circ \claw{x_n}(0^n)$, $v = \claw{y_{i+1}}\circ\dots\circ \claw{y_n}(0^n)$ satisfies $\bc(u) < s, \bc(v) < s$ and $\claw{x_i}(u) = \claw{y_i}(v)$ with $x_i \neq y_i$, hence the pair $u,v$ forms a solution to the original instance $(\claw{0}, \claw{1}, s)$ of \GeneralClaw, case 1. Otherwise, if
\[
\claw{x_i}\circ\dots\circ \claw{x_n}(0^n) \neq \claw{y_i}\circ\dots\circ \claw{y_n}(0^n), 
\]
then there must be some $j < i$, such that 
\[
\claw{x_j}\circ\dots\circ \claw{x_n}(0^n) = \claw{y_j}\circ\dots\circ \claw{y_n}(0^n), 
\]
and we consider the largest such $j$. Then, it holds that 
\[
\claw{x_{j+1}}\circ\dots\circ \claw{x_n}(0^n) \neq \claw{y_{j+1}}\circ\dots\circ \claw{y_n}(0^n).
\]
The pair $u = \claw{x_{j+1}}\circ\dots\circ \claw{x_n}(0^n)$, $v = \claw{y_{j+1}}\circ\dots\circ \claw{y_n}(0^n)$ satisfies $u \neq v$, $\bc(u) < s$, $\bc(v) < s$ and 
$\claw{x_j}(u) = \claw{y_j}(v)$. 
So, the pair $u,v$ forms a solution to the original instance $(\claw{0}, \claw{1}, s)$ of \GeneralClaw, case 1, 2, or 3 based on the bits $x_j$ and  $y_j$. 
 \end{proof}

The above \Cref{lemma:DLOG_to_GeneralClaw} and \Cref{lemma:GeneralClaw_To_Collision} imply that \DLog lies in \PWPP and we conclude this section with the corresponding corollary.
\begin{corollary}\label{cor:DLog_lies_in_PWPP}
\DLog lies in \PWPP.
\end{corollary}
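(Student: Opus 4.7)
The plan is immediate: compose the two preceding lemmas. By \Cref{lemma:DLOG_to_GeneralClaw}, there is a polynomial-time reduction from \DLog to \GeneralClaw; by \Cref{lemma:GeneralClaw_To_Collision}, there is a polynomial-time reduction from \GeneralClaw to \Collision. Since the polynomial-time many-one reductions between total search problems defined in \Cref{sec:Preliminaries} are closed under composition, concatenating these two reductions produces a polynomial-time reduction from \DLog to \Collision, and by \Cref{def:problem_Collision} this is exactly the assertion that $\DLog\in\PWPP$.

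Concretely, I would take an instance $G=(s,f,id,g,t)$ of \DLog, first pass it through the reduction of \Cref{lemma:DLOG_to_GeneralClaw} to obtain a \GeneralClaw instance $(\claw{0},\claw{1},s)$, and then pass that instance through the reduction of \Cref{lemma:GeneralClaw_To_Collision} to obtain a shrinking circuit $\circuit{C}\colon\B^{n+1}\to\B^n$. In the reverse direction, any collision in $\circuit{C}$ is translated by the second reduction into one of the five solution types of \GeneralClaw, which the first reduction then translates into one of the five solution types of \DLog for $G$. Both mappings run in polynomial time, so does their composition, and the solution-mapping respects the relation by construction.

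There is no substantive obstacle at this step: all the real work has already been absorbed into the two preceding lemmas. The first encodes the square-and-multiply indexing function as the pair of pseudopermutations $\claw{0}(u)=\bd(\I(\bc(u)))$ and $\claw{1}(u)=f(\bd(t),\bd(\I(\bc(u))))$, with overflow into $[2^n]\setminus[s]$ handled by the solution cases $4$ and $5$ of \GeneralClaw that were introduced precisely for this purpose. The second iterates the Damg\aa{}rd-style claw-to-collision construction $\claw{x_0}\circ\cdots\circ\claw{x_n}(0^n)$ and carefully extracts the relevant violation from wherever the composed trajectory first escapes $[s]$. Given these two reductions, the corollary is an immediate consequence of transitivity.
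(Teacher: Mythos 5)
Your proof is correct and matches the paper exactly: the corollary is obtained by composing the reductions of \Cref{lemma:DLOG_to_GeneralClaw} and \Cref{lemma:GeneralClaw_To_Collision}, using transitivity of polynomial-time many-one reductions and the definition of \PWPP via \Collision. The additional paragraph recapping the structure of the two reductions is accurate but not needed for this step.
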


\subsection{New Characterizations of \PWPP}\label{sec:New-Characterizations-of-PWPP}
Besides \DLog, our results in~\Cref{sec:DLog-is-PWPP-hard} and~\Cref{sec:DLog-In-PWPP} establish new \PWPP-complete problems \Dove and \Claw.
Below, we provide additional discussion of these new \PWPP-complete problems.

\subsubsection{\Dove}
The chain of reductions in~\Cref{sec:DLog-PWPP} shows, in particular, that \Dove (\Cref{def:problem_Dove}) is \PWPP-complete.
The most significant property of \Dove compared to the known \PWPP-complete problems (\Pigeon or the weak constrained SIS problem defined by Sotiraki et al.~\cite{SotirakiZZ18}) is that it is not defined in terms of an explicitly shrinking function.
Nevertheless, it is equivalent to \Collision and, thus, it inherently captures some notion of compression.
Given its different structure compared to \Collision, we were able to leverage it in our proof of \PWPP-hardness of \DLog, and it might prove useful in other attempts at proving \PWPP-hardness of additional total search problems.

We emphasize that all four types of solutions in \Dove are exploited towards our reduction from \Dove to \DLog and we are not aware of a more direct approach of reducing \Collision to \DLog that avoids \Dove as an intermediate problem.
To further illustrate the importance of the distinct types of solutions in \Dove, consider the following seemingly related \PWPP-complete problem \Prefixcollision with length-preserving instances, in which we are asked to find a collision in the first $n-1$ bits of the output.

\begin{definition}[\Prefixcollision]\label{def:problem_Prefix-Collision}
The search problem \Prefixcollision is defined via the following relation of instances and solutions.
\begin{description}
    \item[Instance:] A Boolean circuit $\circuit{C}$ with $n$ inputs and $n$ outputs.
    \item[Solution:] Distinct $u,v \in \{0,1\}^n$ such that for some $z\in\B^{n-1}$ and $b,b'\in\B$ it holds that $\circuit{C}(v) = z||b$ and $\circuit{C}(u) = z||b'$.
\end{description}
\end{definition}

It is clear that \Prefixcollision is \PWPP-hard -- any Boolean circuit $\circuit{C}\colon \B^n \to \B^{n-1}$ specifying an instance of \Collision can be transformed into an equivalent instance of \Prefixcollision simply by padding the output to length $n$ by a single zero.
Similarly, \Prefixcollision reduces to \Collision{} -- any Boolean circuit $\circuit{C}\colon \B^n \to \B^{n}$ specifying an instance of \Prefixcollision can be transformed into an instance of \Collision simply by ignoring the last bit, which preserves the collisions on the first $n-1$ bits of the output of $\circuit{C}$.
However, the problem \Prefixcollision is not sufficiently structured to allow adapting our reduction from \Dove to \DLog (\Cref{lemma:Dove_To_DLog}) and we currently do not see an immediate alternative way of reducing from \Prefixcollision to \DLog.

\subsubsection{\Claw}
Russel \cite{Russell95} showed that a weakening of claw-free permutations is sufficient for collision-resistant hashing.
Specifically, he leveraged claw-free \emph{pseudopermutations}, i.e., functions for which it is also computationally infeasible to find a witness refuting their bijectivity (in addition to the hardness of finding claws).
Our definition of \Claw ensures totality by an identical existential argument -- a pair of functions with identical domain and range either has a claw or we can efficiently witness that one of the functions is not surjective.

\Claw trivially reduces to the \PWPP-complete problem \GeneralClaw and, thus, it is contained in \PWPP.
Below, we provide also a reduction from \Collision to \Claw establishing that it is \PWPP-hard.

\begin{lemma}\label{lemma:Collision-reducible-to-Claw}
\Collision is reducible to \Claw.
\end{lemma}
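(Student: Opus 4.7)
The plan is to construct a pair of length-preserving circuits $\claw{0}, \claw{1}$ out of the given \Collision{} instance $\circuit{C}$ by splitting the domain of $\circuit{C}$ into two halves according to the first input bit, so that each of the three solution types of \Claw{} immediately yields a non-trivial collision in $\circuit{C}$.

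First, I would normalize the input exactly as in the proof of \Cref{lemma:Collision_To_Dove}: given $\circuit{C}\colon\B^n\to\B^m$ with $m<n$, pad the output with zeroes to obtain an equivalent \Collision{} instance $\circuit{C}'\colon\B^n\to\B^{n-1}$, which preserves the set of collisions. Next, define $\claw{0},\claw{1}\colon\B^{n-1}\to\B^{n-1}$ by $\claw{0}(x) = \circuit{C}'(0\,||\,x)$ and $\claw{1}(x) = \circuit{C}'(1\,||\,x)$, treat $(\claw{0},\claw{1})$ as an instance of \Claw, and let the solution-mapping $g$ be defined by the case analysis below.

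The case analysis is the routine verification. If the \Claw{} solution is a pair $u,v\in\B^{n-1}$ with $\claw{0}(u) = \claw{1}(v)$ (type~1), then $\circuit{C}'(0\,||\,u) = \circuit{C}'(1\,||\,v)$ and the two inputs differ on the leading bit, so they form a collision in $\circuit{C}'$. If the solution is distinct $u,v\in\B^{n-1}$ with $\claw{b}(u) = \claw{b}(v)$ for some $b\in\B$ (types~2 or~3), then $b\,||\,u$ and $b\,||\,v$ are distinct and map to the same value under $\circuit{C}'$. In each case, stripping the padded zeroes produces a collision in the original circuit $\circuit{C}$.

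I do not expect any real obstacle here: the reduction is essentially the standard cryptographic observation that a length-halving hash function is equivalent to a pair of length-preserving functions whose claws encode the cross-half collisions, combined with the syntactic output-padding trick already used earlier in the paper. The only point requiring a little care is making sure that the domains of $\claw{0}$ and $\claw{1}$ match (which is why the output padding to length $n-1$ is convenient), and that the collisions found inside a single $\claw{b}$ are non-trivial in $\circuit{C}$, which follows immediately from $u\neq v$ implying $b\,||\,u\neq b\,||\,v$.
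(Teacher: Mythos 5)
Your proposal is correct, and it takes a genuinely different route from the paper. The paper's reduction keeps the domain $\{0,1\}^n$ intact and tags the \emph{output}: it sets $\claw{0}(x) = \circuit{C}(x)\,||\,0$ and $\claw{1}(x) = \circuit{C}(x)\,||\,1$, so $\claw{0}$ and $\claw{1}$ have disjoint ranges and a claw is syntactically impossible; only the within-circuit collision cases carry content, and each immediately strips back to a collision in $\circuit{C}$. You instead tag the \emph{input}, splitting the domain of $\circuit{C}'$ by its leading bit into $\claw{b}(x) = \circuit{C}'(b\,||\,x)$ on the smaller domain $\{0,1\}^{n-1}$. In your version all three \Claw{} solution types are populated and each yields a collision in $\circuit{C}'$: a claw corresponds to a cross-half collision, a within-$\claw{b}$ collision to a same-half collision. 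Both constructions are valid many-one reductions and both verifications are routine. The paper's version has the minor virtue of producing circuits on the same domain $\{0,1\}^n$ as the original $\circuit{C}$, which makes the solution map a literal identity; yours has the virtue of using the \Claw{} solution structure non-degenerately and of matching the folklore equivalence between a length-halving hash and a pair of length-preserving functions with hard-to-find claws, which is arguably a cleaner picture of why \Claw{} captures \PWPP.
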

\begin{proof}
We start with an arbitrary instance of \Collision given by a Boolean circuit $\circuit{C}: \{0,1\}^n \to \{0,1\}^m$ with $m < n$. Without loss of generality, we can suppose that $m=n-1$ since otherwise we can pad the output with zeroes, which preserves the collisions. We construct an instance of $\Claw$ as follows:
\[
\claw{0}(x)=\circuit{C}(x)0
\]
and
\[
\claw{1}(x)=\circuit{C}(x)1.
\]
We show that any solution to this instance $(\claw{0}, \claw{1})$ of \Claw gives a solution to the original instance $\circuit{C}$ of \Collision. Three cases can occur:
\begin{enumerate}
    \item $u,v \in \{0,1\}^n$ such that $\claw{0}(u)=\claw{1}(v)$. 
    Since the last bit of $\claw{0}(u)$ is zero and the last bit of $\claw{1}(v)$ is one, this case cannot happen.
    \item $u,v \in \{0,1\}^n$ such that $u \neq v$ and $\claw{0}(u)=\claw{0}(v)$. From the definition of $\claw{0}$, we get that $\circuit{C}(u)0=\claw{0}(u)=\claw{0}(v)=\circuit{C}(v)0$, which implies that $\circuit{C}(u)=\circuit{C}(v)$. 
    Hence, the pair $u,v$ forms a solution to the original instance $\circuit{C}$ of \Collision.
    \item $u,v \in \{0,1\}^n$ such that $u \neq v$ and $\claw{1}(u)=\claw{1}(v)$. We can proceed analogously as in the previous case to show that the pair $u,v$ forms a solution to the original instance $\circuit{C}$ of \Collision.
    \qedhere
\end{enumerate}
\end{proof}

\section{\Index is \PPP-complete}\label{sec:Index-PPP}

In this section, we study the complexity of a more restricted version of \DLog that we call \Index.
In the definition of \Index, we use the notation from~\Cref{sec:DLog-PWPP} introduced for the definition of \DLog.
In particular, the function $\I$ is the same as defined in~\Cref{algo:Ig}. 

\begin{definition}[\Index]\label{def:index}
The search problem \Index is defined via the following relation of instances and solutions
\begin{description}
    \item[Instance:] A tuple $(s,f,id,g,t)$, where $s\in\Z^+$ is a size parameter such that $s\ge 2$ and $f$ is a Boolean circuit with $2\lceil \log(s)\rceil$ inputs and  $\lceil \log(s)\rceil$ outputs, i.e., $(s,f)$ represent a groupoid $(\GG, \star)$, and $id,g,t\in[s]$ are some indices of elements in $\GG$.
    \item[Solution:] One of the following:
    \begin{enumerate}
        \item $x \in [s]$ such that $\ind(x) = t$,
        \item distinct $x,y \in [s]$ such that $f_\GG(x,y) \geq s$,
        \item distinct $x, y \in [s]$ such that  $\ind(x) = \ind(y)$.
    \end{enumerate}
\end{description}
\end{definition}

It is immediate that \DLog (\Cref{def:problem_DLOG}) is a relaxation of \Index due to the additional types of solutions.
In~\Cref{sec:Index-is-PPP-hard}, we show that \Index is \PPP-hard.
In~\Cref{sec:Index-In-PPP}, we show that \Index lies in \PPP.
Therefore, we prove \PPP-completeness of \Index.

\begin{theorem}\label{thm:Index_is_PPP_complete}
\Index is \PPP-complete.
\end{theorem}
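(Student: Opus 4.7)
The plan is to prove \Cref{thm:Index_is_PPP_complete} by giving direct reductions between \Index and \Pigeon in both directions, without passing through an intermediate problem of the kind used in \Cref{sec:DLog-PWPP}. The three solution types of \Index align naturally with the two solution types of \Pigeon together with the extra slack provided when the domain size $s$ is strictly less than $2^l$, where $l=\lceil\log s\rceil$.

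For \emph{containment} (\Cref{sec:Index-In-PPP}), given $(s,f,id,g,t)$ I would construct a \Pigeon circuit $C\colon\{0,1\}^l\to\{0,1\}^l$ as follows. On input $u$ with $\bc(u)\geq s$, output $u$ unchanged. On input $u$ with $\bc(u)<s$, simulate $\I(\bc(u))$ via \Cref{algo:Ig} while monitoring whether any application of $f$ during the simulation returns an $l$-bit string whose $\bc$-value is $\geq s$; if so, also output $u$ unchanged. Otherwise, the simulation produces a value $v=\I(\bc(u))<s$, and output $\bd((v-t+s)\bmod s)$ padded to $l$ bits. With this definition, $C$ always produces an output with $\bc$-value in $[s]$ on non-overflowing inputs with $\bc<s$, and acts as the identity on all other inputs. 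Consequently, a \Pigeon preimage of $0^l$ must come from the non-overflow branch and forces $\I(\bc(u))=t$, a type-1 \Index solution; a \Pigeon collision on two non-overflow inputs in $[s]$ yields a type-3 \Index solution via $\I(\bc(u))=\I(\bc(v))$; and a collision on two inputs from different branches (an overflow input against a non-overflow input) always allows us to read off a pair $(a,b)$ witnessing $f_\GG(a,b)\geq s$ by inspecting the simulation at the first overflow index, yielding a type-2 solution. The remaining mixed cases are ruled out by comparing the $\bc$-values of the outputs of $C$.

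For \emph{hardness} (\Cref{sec:Index-is-PPP-hard}), I would reduce \Pigeon to \Index by adapting the construction of \Cref{lemma:Dove_To_DLog}. Given a \Pigeon instance $C\colon\{0,1\}^n\to\{0,1\}^n$, set $s=2^n$ so that $f_\GG$ is automatically closed on $[s]$ and the type-2 solutions of \Index become vacuous. Choose $id$, $g$, and $t$ appropriately (with $\bd(t)=0^n$ in particular), and define $f$ so that the two functions driving \Cref{algo:Ig} are $\fo(r)=C(r)$ and $\fl(r)=C(\pi(r))$ for a suitable fixed-point-free permutation $\pi\colon\{0,1\}^n\to\{0,1\}^n$ determined by $\bd(g)$; on the remaining inputs $f$ is defined trivially. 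A type-1 solution $\I(x)=t=0^n$ then traces back through the last operation of the iterated composition to yield a preimage of $0^n$ in $C$, and a type-3 solution $\I(x)=\I(y)$ is analyzed by the same case split on the trailing bits of $\bd_0(x)$ and $\bd_0(y)$ as in the proof of \Cref{lemma:Dove_To_DLog}.

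The \emph{main obstacle} is the step of the type-3 case analysis that in \Cref{lemma:Dove_To_DLog} produced a \Dove type-4 ``distinction'' witness, namely an equation of the form $C(a)=C(\pi(b))$ in which $a=\pi(b)$; \Pigeon has no such relaxation. My plan is to choose $\pi$ together with $\bd(id)$ so that this case either yields a \Pigeon collision directly (using that fixed-point freeness of $\pi$ forces $a\neq b$ so the relation $a=\pi(b)$ is an equation between two distinct inputs of a subsequent application of $C$) or allows the trace to be extended one step deeper into the computation of $\I(x)$ and $\I(y)$; iterating this argument, the induction must terminate either at a collision in $C$ or at the initialization step $r=\bd(id)$, exposing a preimage of $0^n$. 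Making this induction close out cleanly -- in particular, choosing $\pi$, $id$, and $g$ concretely so that no further distinction-like obstruction appears -- is the crux of the reduction.
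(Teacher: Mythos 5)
Your containment direction is essentially the paper's reduction (\Cref{lemma:Index_To_Pigeon}): the paper also builds a circuit $\circuit{C}\colon\B^l\to\B^l$ that, for $\bc(u)<s$, outputs $\bd(\I(\bc(u))-t\bmod s)$ and otherwise acts as the identity, recovering a type-2 \Index solution in the post-processing whenever the simulation of $\I$ produced a value $\ge s$. Your variant, which detects the overflow inside the circuit and routes the input to the identity branch, changes the bookkeeping but not the idea; the collision analysis you sketch is the same.

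The hardness direction, however, has a real gap, and it is precisely the obstruction the paper isolates. You propose $\fo(r)=\circuit{C}(r)$ and $\fl(r)=\circuit{C}(\pi(r))$ for a fixed-point-free permutation $\pi$, hoping that the case which in \Cref{lemma:Dove_To_DLog} yielded a \Dove type-4 witness can instead be iterated. It cannot: that branch ends with an equation of the form $\circuit{C}(c)=\pi(\circuit{C}(b))$, and fixed-point freeness of $\pi$ forces $\circuit{C}(c)\neq\circuit{C}(b)$ -- an \emph{anti}-collision of $\circuit{C}$, not a \Pigeon solution. Unpacking one more level, $c=\circuit{C}(c')$, $b=\circuit{C}(b')$, gives $\circuit{C}(\circuit{C}(c'))=\pi(\circuit{C}(\circuit{C}(b')))$, where $\pi$ is still interposed and $\circuit{C}(\circuit{C}(c'))\neq\circuit{C}(\circuit{C}(b'))$ for the same reason, so the induction never closes. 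Dropping $\pi$ (taking $\fo=\fl$) is also not an option, as the paper already notes: it immediately creates collisions $\I(x)=\I(y)$ for any $x,y$ of the same Hamming weight, none of which relate to $\circuit{C}$. This is exactly why the paper introduces \Dove as a genuine relaxation of \Pigeon rather than reducing directly.

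The paper's actual proof of \PPP-hardness (\Cref{lemma:Pigeon_To_Index}) is built on a different principle: it never evaluates $\circuit{C}$ at internal nodes of the square-and-multiply computation tree. It sets $s=2^{n+2}$ (not $2^n$), so the tree has exactly $2^n$ leaves $A_o$ (the odd integers in $[2^{n+1},2^{n+2})$). It then designs $\fo,\fl$ so that $\I$ restricted to the internal nodes $[2^{n+2}]\setminus A_o$ is a fixed bijection onto $[2^n,2^{n+2})$, independent of $\circuit{C}$ (a cyclic-shift/add-one construction, shifted by $2^n$), while on the leaves $a\in A_o$ it applies $\circuit{C}$ to the $n$-bit index of the leaf, mapping into $[2^n]$. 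Because those two image sets are disjoint and the internal part is bijective, every collision $\I(x)=\I(y)$ and every preimage of $t=0$ is forced to involve only leaves, and these translate directly into \Pigeon solutions for $\circuit{C}$. Separating structure (internal nodes, no $\circuit{C}$) from computation (leaves, one application of $\circuit{C}$) is the crux your plan is missing.
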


\subsection{\Index is \PPP-hard}\label{sec:Index-is-PPP-hard}

The formal reduction from the \PPP-complete problem \Pigeon to \Index is arguably the most technical part of our work.
Given a Boolean circuit $\circuit{C}\colon \{0,1\}^n \to \{0,1\}^n$ specifying an instance of \Pigeon, our main idea is to define an instance $G=(s,f,id,g,t)$ of \Index such that the induced indexing function $\I$ carefully ``emulates'' the computation of the circuit $\circuit{C}$ --
so that any solution to $G$ provides a solution to the original instance \circuit{C} of \Pigeon.
In order to achieve this, we exploit the structure of the computation induced by $\I$ in terms of evaluations of the circuit $f$ representing the binary operation in the groupoid $(\GG,\star)$.
Specifically, the computation of $\I$ gives rise to a tree labeled by the values output by $\I$ and structured by the two special types of calls to $f$ (i.e., squaring the intermediate value or multiplying it by the generator).
Our reduction constructs $f$ inducing $\I$ with the computation corresponding to a sufficiently large such tree so that its leaves can represent all the possible inputs for the instance $\circuit{C}$ of \Pigeon and the induced indexing function $\I$ outputs the corresponding evaluation of $\circuit{C}$ at each leaf.
Moreover, for the remaining nodes in the tree, $\I$ results in a bijection to ensure there are no additional solutions of the constructed instance of \Index that would be unrelated to the original instance of \Pigeon.
Below, we provide additional details of the ideas behind the formal reduction.

Similarly to the reduction from \Dove to \DLog, the key step in our construction of $G$ is a suitable choice of the circuit $f$ since it determines the function $\I$.
Recall the notation for $\fo$ and $\fl$ introduced in the reduction from \Dove to \DLog, \ie $\fo(r)=f(r,r)$ and $\fl(r)=f(g,r)$.  
We start by describing a construction of an induced groupoid $(\GG,\star)$ independent of the instance \circuit{C} of \Pigeon but which serves as a natural step towards our reduction. 

\paragraph{Constructing bijective $\I$.}
Our initial goal in the first construction is to define $\fo $ and $\fl$ and the elements $id,g\in[s]$ such that $\I$ is the identity function,~\ie such that $\I(a)=a$ for all $a\in[s]$.
To this end, our key observation is that, for many pairs of inputs $a,b\in[s]$, the computation of $\I(b)$ includes the whole computation of $\I(a)$ as a prefix (see \Cref{algo:Ig}), e.g., for all $a,b\in[s]$ such that
\begin{itemize}
    \item[--] either $\bd_0(a)$ is a prefix of $\bd_0(b)$
    \item[--] or $\bd_0(a)=y||0$ and $\bd_0(b)=y||1$ for some $y\in\B^*$.
\end{itemize}
Specifically, if $\bd_0(a)=y||0$ then $\I(a) = \bc(\fo(\bd(\I(\bc(y)))))$,
and if $\bd_0(a)=y||1$ then $\I(a) = \bc(\fl(\bd(\I(\bc(y||0)))))$.

\begin{figure}
\begin{subfigure}{.5\textwidth}
\scalebox{0.95}{
\begin{forest}
for tree={l sep=20pt,s sep=5mm}
[\text{$\,\,\,id=0\,\,\,$}, draw, 
    [\text{$0000=0$}, draw, edge label={node[midway,left] {$\fo$}}
        [\text{$0001=1$}, draw, edge=dashed, edge label={node[midway,left] {$\fl$}}
            [\text{$0010=2$}, draw, edge label={node[midway,left] {$\fo$}}
                [\text{$0011=3\,\,$},draw,edge=dashed,edge label={node[midway,left] {$\fl\,\,\,$}}
                    [\text{$0110=6$}, draw, edge label={node[midway,left] {$\fo$}} 
                	    [\text{$0111=7$},draw,edge=dashed,edge label={node[midway,left] {$\fl\,\,$}} 
                			[\text{$1110=14$}, draw, edge label={node[midway,left] {$\fo$}}
                			    [\text{$1111=15$}, draw, edge=dashed,edge label={node[midway,left] {$\fl$}}]
                			]
                		]
                		[\text{$1100=12$}, draw, edge label={node[midway,right] {$\,\,\fo$}}
                			[\text{$1101=13$}, draw,edge=dashed, edge label={node[midway,left] {$\fl$}}]
                	    ]
                    ]    
            	]
            	[\text{$0100=4$},draw,edge label={node[midway,right] {$\,\,\fo$}}
            	    [\text{$0101=5$},draw,edge=dashed,edge label={node[midway,left] {$\fl\,\,$}}
            		    [\text{$1010=10$}, draw, edge label={node[midway,left] {$\fo$}}
            		        [\text{$1011=11$}, draw, edge=dashed,edge label={node[midway,left] {$\fl$}}]
            		    ]
            		] 
            	    [\text{$1000=8$}, draw, edge label={node[midway,right] {$\,\,\fo$}}
            		    [\text{$1001=9$}, draw, edge=dashed, edge label={node[midway,left] {$\fl$}}]
            		]
            	]
            ]
        ]
    ]
]
\end{forest}}
\caption{$\I(a)=a$}\label{fig:tree1}
\end{subfigure}
\begin{subfigure}{.5\textwidth}
\scalebox{0.95}{
\begin{forest}
for tree={l sep=20pt,s sep=5mm}
[\text{$\,\,\,id=4\,\,\,$}, draw, 
    [\text{$\,\,\,4\,\,\,$}, draw, edge label={node[midway,left] {$\fo$}}
        [\text{$\,\,\,5\,\,\,$}, draw, edge=dashed, edge label={node[midway,left] 
        {$\fl$}}
            [\text{$\,\,\,6\,\,\,$}, draw, edge label={node[midway,left] {$\fo$}}
                [\text{$\,\,\,7\,\,\,$},draw,edge=dashed,edge label={node[midway,left] {$\fl\,\,\,$}}
                    [\text{$\,\,\,10\,\,\,$}, draw, edge label={node[midway,left] {$\fo$}} 
                	    [\text{$\,\,\,11\,\,\,$},draw,edge=dashed,edge label={node[midway,left] {$\fl\,\,$}} 
                			[\text{$\,\,\,15\,\,\,$}, dashed, draw, edge label={node[midway,left] {$\fo$}}
                			    [\text{$\bc(\circuit{C}(11))$}, dotted, draw, edge=dashed,edge label={node[midway,left] {$\fl$}}]
                			]
                		]
                		[\text{$\,\,\,14\,\,\,$},dashed, draw, edge label={node[midway,right] {$\,\,\fo$}}
                			[\text{$\bc(\circuit{C}(10))$}, dotted, draw,edge=dashed, edge label={node[midway,left] {$\fl$}}]
                	    ]
                    ]    
            	]
            	[\text{$\,\,\,8\,\,\,$},draw,edge label={node[midway,right] {$\,\,\fo$}}
            	    [\text{$\,\,\,9\,\,\,$},draw,edge=dashed,edge label={node[midway,left] {$\fl\,\,$}}
            		    [\text{$\,\,\,13\,\,\,$}, dashed, draw, edge label={node[midway,left] {$\fo$}}
            		        [\text{$\bc(\circuit{C}(01))$}, dotted, draw, edge=dashed,edge label={node[midway,left] {$\fl$}}]
            		    ]
            		] 
            	    [\text{$\,\,\,12\,\,\,$}, dashed, draw, edge label={node[midway,right] {$\,\,\fo$}}
            		    [\text{$\bc(\circuit{C}(00))$}, dotted, draw, edge=dashed, edge label={node[midway,left] {$\fl$}}]
            		]
            	]
            ]
        ]
    ]
]
\end{forest}}
\caption{\circuit{C} incorporated}\label{fig:tree2}
\end{subfigure}
\caption{Trees induced by the computation of $\I$.}\label{fig:trees}
\end{figure}
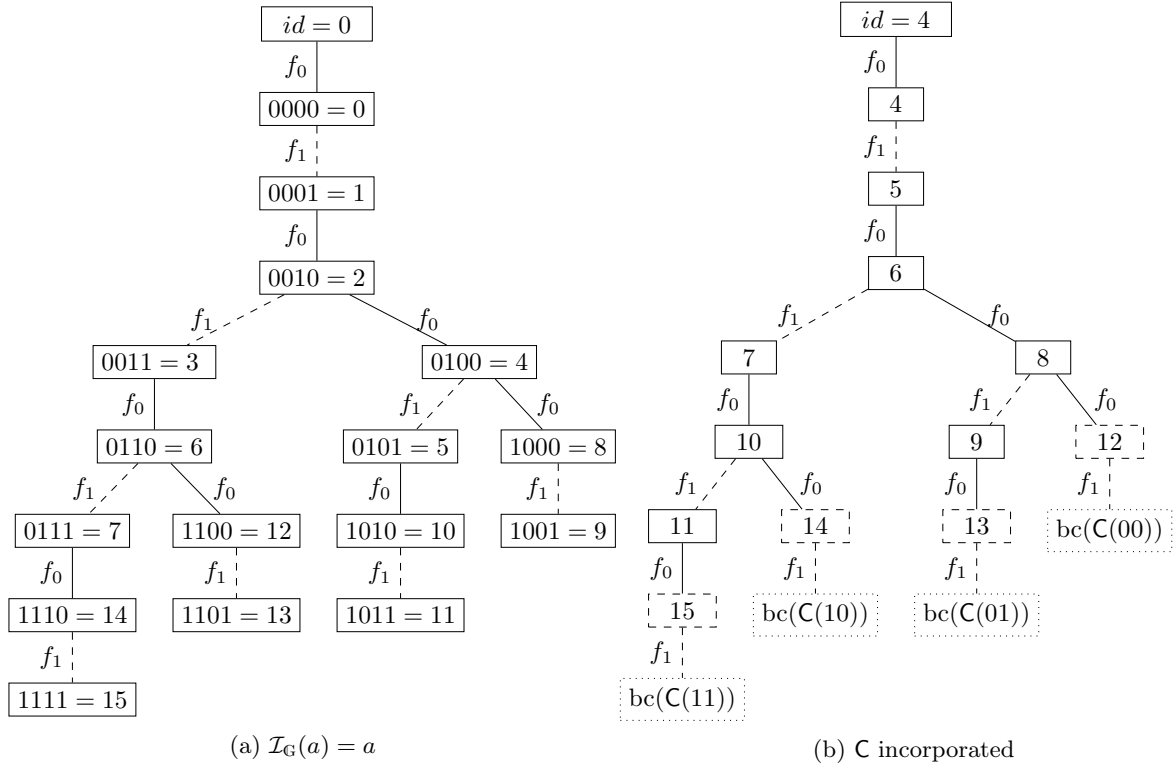

Thus, we can capture the whole computation of $\I$ on all the possible inputs from $\GG$ via a tree representing the successive calls to $f_0$ and $f_1$ based on the bit decomposition $\bd_0(a)$ of the input $a$ without the leading zeroes.
In~\Cref{fig:tree1}, we give a tree induced by the computation of $\I$ in a groupoid of order $s=16$ with $id=0$.
Solid lines correspond to the application of $\fo$ and dotted lines to application of $\fl$.
Except for the root labeled by the identity element $id$, each node of the tree corresponds to the point at which $\I$ terminates on the corresponding input $a\in[s]$, where the second value in the label of the node is the input $a$ and the first value is $\bd(a)$, \ie the binary representation of $a$ with the leading zeroes.

Note that \Cref{fig:tree1} actually suggests which functions $\fo$ and $\fl$ induce $\I$ such that $\I(a)=a$ for all $a\in[s]$.
In particular, \Cref{algo:Ig} initializes the computation of $\I$ with $r=\bd(id)=\bd(0)=0^n$ and, thus, the desired traversal of the computation tree is achieved for all inputs $a\in[s]$ by 1) $\fo$ that performs a cyclic shift of the input $r$ to the left and 2) $\fl$ that flips the last bit of the input $r$.

Similarly, the above observation allows to construct $\fo'$ and $\fl'$ such that for all $a\in[s]$ that $\I(a)=a+b \mod s$ for some fixed $b\in[s]$, which can be performed simply by setting $id=b$ and consistently ``shifting'' the intermediate value $r$ by the bit decomposition of the fixed value $b$ before and after application of the above functions $\fo$ and $\fl$. 

\paragraph{Incorporating the \Pigeon instance.}
The issue which makes it nontrivial to reduce from \Pigeon to \Index is that the functions $\fo$ and $\fl$ inducing the groupoid $(\GG,\star)$ are oblivious to the actual progress of the computation performed by $\I$.
The above discussion shows that we have some level of control over the computation of $\I$.
However, it is a priori unclear how to meaningfully incorporate the \Pigeon instance $\circuit{C}$ into the above construction achieving that $\I(a)=a$ for all $a\in[s]$.
For example, we cannot simply allow $\fo$ or $\fl$ to output $\circuit{C}(r)$ while at some internal node in the computation tree of $\I$ as this would completely break the global structure of $\I$ on the node and all its children and, in particular, could induce collisions in $\I$ unrelated to the collisions in $\circuit{C}$.
However, we can postpone the application of $\circuit{C}$ to the leaves of the tree since, for all inputs $a$ corresponding to a leaf in the tree, the computation of $\I(a)$ is not a part of the computation for $\I(b)$ for another input $b$.

Given that we are restricted to the leaves of the computation tree when embedding the computation of $\circuit{C}$ into $\I$, we must work with a big enough tree in order to have as many leaves as the $2^n$ possible inputs of the circuit $\circuit{C}\colon\B^n\to\B^n$.
In other words, the instance of \Index must correspond to a groupoid of order $s$ strictly larger than $n$.
Note that for $s=2^k$, the leaves of the tree correspond exactly to the inputs for $\I$ from the set
\[
A_o=\{a\in[2^k]\mid\exists y\in\B^{k-2}\colon \bd(a)=1||y||1\},
\]
i.e., the set of \emph{odd} integers between $2^{k-1}$ and $2^{k}$, which has size $2^{k-2}$.
Thus, in our construction, we set $s=2^{n+2}$ to ensure that there are $2^n$ leaves that can represent the domain of $\circuit{C}$.

Our goal is to define $\I$ so that its restriction to the internal nodes of the tree (non-leaves) is a bijection between $[2^{n+2}] \setminus A_o$ and $[2^{n+2}] \setminus [2^n]$.
In other words, when evaluated on any internal node of the tree, $\I$ avoids the values in $[2^n]$ corresponding to bit composition of the elements in the range of $\circuit{C}$.
If we manage to induce such $\I$ then there are no non-trivial collisions in $\ind$ involving the internal nodes -- the restrictions of $\ind$ to $A_o$ and to its complement $[2^{n+2}] \setminus A_o$ would have disjoint images and, by the bijective property of the restriction to the internal nodes of the tree, any collision in $\I$ would be induced by a collision in $\circuit{C}$.
Our construction achieves this goal by starting with $\fo$ and $\fl$ inducing $\I$ such that, for all $a\in[2^{n+2}]$, it holds that $\I(a)=a+2^n \mod 2^{n+2}$, which we already explained above.

Note that the image of the restriction of $\ind$ to the set
\[
A_e=\{a\in[2^{n+2}]\mid\exists y\in\B^{n}\colon \bd(a)=1||y||0\},
\]
i.e., the set of \emph{even} integers between $2^{n+1}$ and $2^{n+2}$, has non-empty intersection with integers in $[2^n]$ corresponding to the range of $\circuit{C}$.
Nevertheless, it is possible to locally alter the behaviour of $\fo$ and$\fl$ on $A_e$ so that $\ind$ does not map to $[2^n]$ when evaluated on $A_e$.
Then, we adjust the definition of $\fo$ and $\fl$ such that for all inputs $a\in A_o$ corresponding to a leaf of the tree, $\I(a)=\bc(C(h(a)))$ for some bijection $h$ between $A_o$ and $\B^n$ (e.g., one specific choice is simply the function that drops the first and the last bit from the binary decomposition $\bd(a)$ of $a$).
Finally, we set the target in the resulting instance of \Index to $t=0$ to ensure that the preimage of $t$ under $\I$ corresponds exactly to a preimage of $0^n$ under $\circuit{C}$.

In \Cref{fig:tree2}, we illustrate the computation tree of $\I$ corresponding to an instance of \Index produced by our reduction on input $\circuit{C}\colon \B^n\to\B^n$ for $n=2$.
Accordingly, $\GG$ is of size $s=2^{n+2}=16$ and its elements are represented by the nodes of the tree.
When compared with the tree in \Cref{fig:tree1}, the label of each node in \Cref{fig:tree2} equals the value $\I(a)$, where $a$ is the second value in the label of the node at the same position in the tree in \Cref{fig:tree1}.
Nodes belonging to $[2^s]\setminus A_e \cup A_o, A_e,$ and $A_o$ are highlighted by differing styles of edges.
Specifically, the labels of nodes with a solid edge correspond to the evaluations of the inputs from $[2^{n+1}]=[8]$, the labels of nodes with a dashed edge correspond to evaluations of the inputs from $A_e$, and the labels of nodes with a dotted edge correspond to the evaluations of the inputs from $A_o$.
Since the image of $\bc\circ\circuit{C}$ is $[2^n]=[4]$, it is straightforward to verify that any collision in  $\I$ depicted in \cref{fig:tree2} must correspond to a collision in $\circuit{C}$ and that any preimage of $t=0$ under $\I$ corresponds directly to a preimage of $0^n$ under $\circuit{C}$.

The formal reduction establishing \Cref{lemma:Pigeon_To_Index} is given below.

\begin{lemma}\label{lemma:Pigeon_To_Index}
\Pigeon is reducible to \Index.
\end{lemma}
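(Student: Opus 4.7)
The plan is to construct, from a given \Pigeon instance $\circuit{C}\colon\{0,1\}^n\to\{0,1\}^n$, an \Index instance $G=(s,f,id,g,t)$ with $s=2^{n+2}$ and target $t=0$, in which the induced indexing function $\I$ has a very rigid structure on $[s]$: on the set $A_o$ of odd integers in $[2^{n+1},2^{n+2})$ (the leaves of the natural computation tree of $\I$), the value $\I(a)$ equals $\bc(\circuit{C}(h(a)))$ for a fixed bijection $h\colon A_o\to\{0,1\}^n$, for example the map that strips the leading and trailing bits from $\bd(a)$; and on the complement $[s]\setminus A_o$ (the internal nodes of the tree), $\I$ restricts to a bijection onto $[s]\setminus[2^n]$. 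With this structure in hand, the preimage of $t=0$ under $\I$ can only come from a leaf and directly yields a preimage of $0^n$ under $\circuit{C}$, while any non-trivial collision in $\I$ must involve two leaves and hence a collision of $\circuit{C}$.

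The construction of $f$ proceeds in three layers, building up $\fo$ and $\fl$ along the lines of the overview. First, I set a baseline: let $\fo$ be a cyclic left shift of its input, let $\fl$ flip the last bit, and set $id=2^n$. An induction on the length of $\bd_0(a)$, using $\I(a)=\bc(\fo(\bd(\I(\bc(y)))))$ when $\bd_0(a)=y\,||\,0$ and the analogous identity for $\fl$, shows that this baseline yields $\I(a)=a+2^n\pmod{2^{n+2}}$ for every $a\in[s]$, which is already a bijection on $[s]$. Second, I locally override $\fo$ and $\fl$ on the particular intermediate values that appear only in the final step of the computation of $\I(a)$ for $a$ in the ``even outer'' layer $A_e=\{a\in[s]:\bd(a)=1\,||\,y\,||\,0,\;|y|=n\}$, redirecting those outputs into $[s]\setminus[2^n]$ in a manner that preserves bijectivity of the combined restriction to $[s]\setminus A_o$. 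Third, on the leaves $A_o$, I again override the last step so that the output becomes $\bc(\circuit{C}(h(a)))$; this is safe because, for $a\in A_o$, the computation of $\I(a)$ is not a prefix of any other computation $\I(b)$.

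With $f$ fixed, correctness reduces to a case analysis on the three solution types of \Index. A type-1 solution is an $a\in[s]$ with $\I(a)=0$; since the internal-node restriction of $\I$ maps into $[s]\setminus[2^n]$ and $0\in[2^n]$, such $a$ must lie in $A_o$, whence $\circuit{C}(h(a))=0^n$ and $h(a)$ solves \Pigeon. A type-3 solution is a pair $a\neq b$ with $\I(a)=\I(b)$; by disjointness of the images of the leaf and internal-node restrictions, and by bijectivity of the latter, both $a$ and $b$ must be leaves, so $\circuit{C}(h(a))=\circuit{C}(h(b))$ with $h(a)\neq h(b)$ gives a \Pigeon collision. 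A type-2 solution would require $f_\GG(x,y)\geq s$, which is ruled out by construction: $f$ is arranged to output only strings of length $\lceil\log s\rceil=n+2$, all of which bit-compose into $[s]=[2^{n+2}]$.

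The main obstacle, and the most delicate part of the argument, will be step two of the construction. The local overrides on $A_e$ must simultaneously (i) keep the restriction of $\I$ to $[s]\setminus A_o$ bijective, (ii) ensure that its image avoids the reserved set $[2^n]$, and (iii) not touch any string that arises as an intermediate value elsewhere in the computation tree, since otherwise spurious collisions unrelated to $\circuit{C}$ would be introduced. Because the baseline $\I$ cycles through $[s]$ in a highly structured way, the intermediate $r$-values at each depth of the tree can be enumerated explicitly, and the override can be packaged as a small permutation applied on top of the baseline only at the very last squaring step along the branches leading to $A_e$; checking this non-interference, and verifying that the resulting $f$ is still efficiently representable by a polynomial-size circuit built from $\circuit{C}$, is where most of the technical bookkeeping in the proof will lie.
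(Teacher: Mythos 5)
Your overall plan is the same as the paper's: take $s=2^{n+2}$, $t=0$, arrange $\I$ so that on the internal indices $[s]\setminus A_o$ it is a bijection onto $[s]\setminus[2^n]$ while on the leaves $A_o$ it computes $\bc(\circuit{C}(h(a)))$, and then read off a \Pigeon solution from any \Index solution. However, your baseline construction is incorrect. Setting $id=2^n$ while keeping $\fo$ a cyclic left shift and $\fl$ a last-bit flip does \emph{not} yield $\I(a)=a+2^n\bmod 2^{n+2}$. The base case already fails: with strings of length $n+2$ one has $\bd(id)=01\,0^{n}$, so
\[
\I(0)=\bc\bigl(\fo(\bd(id))\bigr)=\bc(1\,0^{n+1})=2^{n+1}\neq 2^n,
\]
and more generally the squaring step doubles the running $r$-value and hence turns a translation by $2^n$ into a translation by $2^{n+1}$; merely shifting the seed $id$ does not propagate an additive shift through the recursion. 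The fix is to conjugate both maps by the translation, i.e., set $f'_0(r)=\fo(r-w)+w$ and $f'_1(r)=\fl(r-w)+w$ with $w=\bd(2^n)$ and $id=2^n$; the induction on $|\bd_0(a)|$ then does give $\I(a)=a+2^n\bmod 2^{n+2}$, which is exactly the paper's intermediate construction.

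There is a second, smaller issue in your non-interference condition (iii) for the $A_e$ override. For every $a\in A_e$, the computation of $\I(a)$ is a strict prefix of the computation of $\I(a+1)$ with $a+1\in A_o$ (they differ by a single trailing $\fl$-step), so it is impossible to modify the final $\fo$-step on an $A_e$ branch without also changing the intermediate value that feeds the corresponding $A_o$ branch. This does not sink the argument, but the $A_o$ override cannot be chosen independently of the $A_e$ override: the map applied at the leaf must recover $h(a)$ from the $A_e$-modified intermediate value, which is what the paper's third case in the definition of $f$ (on inputs $(g,v)$ with $\bc(v)\in[2^{n+1},2^{n+2})$) is doing. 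Beyond these two points your write-up leaves the override steps and the circuit-size check as promissory notes, and that bookkeeping is in fact where most of the proof resides.
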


Before proving \Cref{lemma:Pigeon_To_Index}, we describe a concrete construction of a representation of a groupoid independent of any instance of \Pigeon, but which we leverage in the proof of \Cref{lemma:Pigeon_To_Index}.

In this section, we denote $\I': \{0,1\}^n \to \{0,1\}^n$ the function
\begin{equation*}
    \I'(u) = \bd(\I(\bc(u))),
\end{equation*}
where $\I$ is induced by a representation $(s,f)$ of a groupoid $(\GG,\star)$ and elements $id,g\in[s]$. 
Recall the terminology from \Cref{sec:DLog-is-PWPP-hard},~\ie that $\fo(r)=f(r,r)$ and $\fl(r) = f(g,r)$ for all $r \in \{0,1\}^n$, where $n=\lceil \log(s) \rceil$.  
Additionally, recall that $\I$ and $\I'$ are fully determined by $\fo$ and $\fl$ as discussed in~\Cref{sec:DLog-is-PWPP-hard}.

For the constructed representation of groupoid $(\GG,\star)$, we set $s=2^n$ and $id=0$.
First, we show how to define $\fo$ and $\fl$ such that for all $v \in \{0,1\}^n,$
\begin{equation}\label{eq:pigeon_to_index_I_as_identity}
\I'(v)=v,
\end{equation}
which is equivalent to $\I(a)=a$ for all $a \in [2^n]$. 
Moreover, as it will be clear from the construction, it is enough to define $\fo$ and $\fl$ only on some subset of its potential inputs.
For all $r \in \{0,1\}^{n-1}$, we define
\begin{equation*}
\fo(0r)=r0 \quad \text{and}\quad
\fl(r0) = r1.
\end{equation*}
In other words, $\fo$ shifts the input string by one position to the left and $\fl$ changes the last bit of the input from $0$ to $1$.
Equivalently, if we interpret the functions $\fo$ and $\fl$ as functions on the corresponding integers, then $\fo$ would represent multiplying the input by two and $\fl$ would represent adding one to the input. 
We show that it is enough to define $\fo$ and $\fl$ only on the inputs of the above special form to determine the whole functions $\I$ and $\I'$.

We prove that \Cref{eq:pigeon_to_index_I_as_identity} holds for all $v \in \{0,1\}^n$ by induction on the length of $\bd_0(\bc(v))$, \ie on the length of $v$ without leading zeroes. 
For $v=0^n$, it holds that 
\[
\I'(v)=\fo(v)=\fo(0^n)=0^n=v,
\] 
so \Cref{eq:pigeon_to_index_I_as_identity} holds.
We show the inductive step first for all $v$ of the form $v=v'0$ and then for all $v$ of the form $v=v'1$.
For all $v=v'0$, we have that
\[
\I'(v)=\fo(\I'(0v'))=\fo(0v')=v'0=v, 
\]
where the first equality is from the definition of $\I'$, the second one is from the inductive hypothesis, and the third one is from the definition of $\fo$. 
Hence, \Cref{eq:pigeon_to_index_I_as_identity} holds.
Similarly, for all $v=v'1$, we have that \[
\I'(v)=\fl(\I'(v'0))=\fl(v'0)=v'1=v, 
\]
where the first equality comes from the definition of $\I'$, the second one from the inductive hypothesis and the third one from the definition of $\fl$. 
Thus, for all $v \in \{0,1\}^n,$ \Cref{eq:pigeon_to_index_I_as_identity} holds.

\Cref{fig:tree1} illustrates the tree corresponding to the computation of $\I$ induced by the above construction of (s,f,id,g) for $s=[16]$. 
Solid lines correspond to applications of $\fo$ and dotted lines to applications of $\fl$. 
For each node, the second value is the input $a$ to $\I$, which in this case equals also the output $\I(a)$, and the first value is $\bd(a)$, \ie the binary representation of $a$ with the leading zeroes.

Now, we show how to adjust the above construction to define $f'_0$ and $f'_1$ such that for a given fixed $w \in \{0,1\}^n$ and for all $v \in \{0,1\}^n$
\begin{equation}\label{eq:pigeon_to_index_I_shifted}
    \I'(v) = v + w,
\end{equation}
where $\I'$ is now determined by $f'_0$ and $f'_1$, and 
by $v + w$ we denote $\bd(\bc(v)+\bc(w) \bmod 2^n)$, \ie the standard addition with the potential carry being ignored.
Observe that \Cref{eq:pigeon_to_index_I_shifted} is equivalent to $\I(a)=a+b \bmod 2^n$ for some fixed $b = \bc(w) \in [2^n]$ and all $a \in [2^n]$. 
We implement this property by shifting the whole computation by $w$. 
To do so, we first change the identity element to $id=\bc(w)$
In the computation of $f'_0$ and $f'_1$, we first subtract $w$ and then apply the original $\fo$ or $\fl$ to the result and finally shift it back by adding $w$.

Formally, we define for all $r\in\B^n$,
\begin{equation*}
f'_0(r)=\fo(r-w)+w \quad\text{and}\quad f'_1(r) = \fl(r-w)+w,     
\end{equation*}
where $r-w$ is defined in the same manner as the addition, \ie $r-w=\bd(\bc(r)-\bc(w) \bmod 2^n)$.
We show that for all $r\in\B^n$,  \Cref{eq:pigeon_to_index_I_shifted} holds by induction on the length of $\bd_0(\bc(v))$ similarly as for the \Cref{eq:pigeon_to_index_I_as_identity}.
For $v=0^n$, we have that $\I'(v)=\I'(0^n)=f'_0(\bd(id)))=f'_0(w)=\fo(w-w)+w=\fo(0^n)+w=0^n+w=w$,
so \Cref{eq:pigeon_to_index_I_shifted} holds.
We show the inductive step first for all $v$ such that $v=v'0$ and then for all $v$ such that $v=v'1$. 
For $v=v'0$, we have that
\begin{align*}
    \I'(v)&=f'_0(\I'(0v'))=f'_0(0v'+w)=\fo(0v'+w-w)+w\\
    &=\fo(0v')+w=v'0=v+w,
\end{align*}
where the first equality comes from the definition of $\I'$, the second one from the inductive hypothesis, the third one from the definition of $f'_0$, and the fifth one from the definition of $\fo$.
Hence, \Cref{eq:pigeon_to_index_I_shifted} holds.
Similarly, for $v=v'1$, we have that 
\begin{align*}
    \I'(v)&=f'_1(\I'(v'0))=f'_1(v'0+w)=\fl(v'0+w-w)+w\\
    &=\fl(v'0)+w=v'1=v+w
\end{align*}
for analogous reasons as before. This concludes the proof that \Cref{eq:pigeon_to_index_I_shifted} holds for all $r\in\B^n$. 

We can now proceed to utilize the above construction in the proof of \Cref{lemma:Pigeon_To_Index}.

\begin{proof}[Proof of \Cref{lemma:Pigeon_To_Index}]
Let $\circuit{C}: \{0,1\}^n \to \{0,1\}^n$ be an arbitrary instance  of \Pigeon. 
We construct an instance $G=(s,f,id,g,t)$ of \Index such that any solution to $G$ gives a solution to the original instance \circuit{C} of \Pigeon. 
In the rest of the proof, we denote by $\mathbb{Z}_{even}$ the subset of $\mathbb{Z}^+$ consisting of even integers and, analogously, by $\mathbb{Z}_{odd}$ we denote the subset of odd integers. 
We set $s=2^{n+2}, g=2^{n+2}-1, id=0$ and $t=0$. 
The idea is to define $f$ such that 
\begin{equation}\label{eq:Index_is_PPP_hard_new}
\I(a)=\begin{cases}
a+2^n &\mbox{if } a \in [2^{n+1}],\\
2^{n+1}+\tfrac{a}{2}&\mbox{if } a \in [2^{n+1},\dots, 2^{n+2}-1] \cap \mathbb{Z}_{even} =: A_e,\\
\bc(\circuit{C}(\bd^{n}(\frac{a-1}{2}-2^n)))&\mbox{if } a \in [2^{n+1},\dots, 2^{n+2}-1] \cap \mathbb{Z}_{odd} =: A_o\\
\end{cases}
\end{equation}

For the case $n=2$, we illustrate the structure of computation corresponding to $\I$ satisfying \Cref{eq:Index_is_PPP_hard_new} in \Cref{fig:tree2}. 
The nodes with a solid edge correspond to the set $[2^{n+1}]=[8]$, the nodes with a dashed edge correspond to $A_e$, and the nodes with a dotted edge correspond to $A_o$.
The label of each node equals the value $\I(a)$, where $a$ is the second value of the node at the same position in the tree in \Cref{fig:tree1}.

Suppose that we can define the circuit $f$ such that the induced indexing function $\I$ satisfies \Cref{eq:Index_is_PPP_hard_new}. 
Then $[s]=[2^{n+2}]=A_o \, \dot{\cup} \, ([2^{n+1}]\,\dot{\cup}\, A_e)$, where $\dot{\cup}$ denotes the disjoint union operation.  
By \Cref{eq:Index_is_PPP_hard_new}, we get that $\I$ restricted to $[2^{n+1}] \,\dot{\cup}\, A_e$ is a bijection between $[2^{n+1}] \,\dot{\cup}\, A_e$ and $[2^n, \dots, 2^{n+2}-1]$. 
Moreover, $\I$ restricted to $A_o$ outputs only values in $[2^n]$.
Hence, any collision in $\I$ or any preimage of $t=0$ under $\I$ can happen only for some values in $A_o$.
Furthermore, for values from $A_o$, the function $\I$ is defined using the input instance \circuit{C} of \Pigeon, so any collision in $\I$ or any preimage of $t$ under $\I$ give us a solution to the original instance $\circuit{C}$ of \Pigeon.
Formally, the oracle solving the above instance of \Index returns one of the following:
\begin{enumerate}

    \item $u \in [s]$ such that $\I(u)=t=0$. From the above discussion, we know it must be the case that $u \in A_o$. Hence, from the definition of $\I$, we get that 
    \begin{align*}
    0^n 
    &= \bd(0)=\bd(t)=\bd(\I(u))\\
    &=\bd(\bc(\circuit{C}(\bd^n(\tfrac{u-1}{2}-2^n))))=\circuit{C}(\bd^n(\tfrac{u-1}{2}-2^n)),
    \end{align*}
    and $\bd^n(\tfrac{u-1}{2}-2^n)$ is a solution to the original instance \circuit{C} of \problem{Pigeon}, case 1.
     \item $u,v \in [s]$ such that $u \neq v$ and $f_\GG(u,v) \geq s.$ Since $s=2^{n+2}$, this case cannot happen.
    \item $u, v \in [s]$ such that $u \neq v$ and $\I(u) = \I(v)$.
    Similarly as for the previous case, it must hold that $u, v \in A_o$. 
    Hence, from the definition of $\I$, we get that 
    \[
    \circuit{C}(\bd^n(\tfrac{u-1}{2}-2^n)) = \circuit{C}(\bd^n(\tfrac{v-1}{2}-2^n)).
    \]
    From the fact that $u \neq v $ and from the definition of the set $A_o$, it follows that $\bd^n(\tfrac{u-1}{2}-2^n) \neq \bd^n(\tfrac{v-1}{2}-2^n)$, hence the pair $\bd^n(\tfrac{u-1}{2}-2^n), \bd^n(\tfrac{v-1}{2}-2^n)$ forms a collision for $\circuit{C}$, \ie a solution to the original instance $\circuit{C}$ of \problem{Pigeon}, case 2.   
\end{enumerate}
It remains to define the circuit $f$ such that the induced indexing function $\I$ satisfies \Cref{eq:Index_is_PPP_hard_new}. 
Her, we make use of the construction defined and analysed above, where we set $w=\bd(2^n)$, \ie for which it holds that $\I(a)=a+2^n \bmod 2^{n+2}$ and that $\I'(v)=v+w$. 
It remains to adjust the definition of $f'_0$ and $f'_1$ such that we get the desired output for the values from $A_e$ and $A_o$.
To this end, we set
\begin{equation}\label{eq:Index_is_PPP_hard_def_of_f}
f(u,v) = 
\begin{cases}
11v' &\mbox{if } u=v\neq g \text{ and } v-w=01v',\\
f'_0(v) &\mbox{if } u=v \neq g \text{ and } v-w\neq 01v',\\
\circuit{C}(\bd^n(\bc(v)-2^n-2^{n+1})) &\mbox{if } u=g \text{ and } \bc(v) \in [2^{n+1}, 2^{n+2}-1],\\
f'_1(v) &\mbox{if } u=g \text{ and } v-w\neq1v'0.\\
\end{cases}
\end{equation}
Note that $f$ can be defined on the remaining inputs arbitrarily since they are not used in the computation of $\I$.

Note that for $a\in[2^{n+1}]$, only the cases 1 and 3 from the definition of $f$ in \Cref{eq:Index_is_PPP_hard_def_of_f} are used in the computation of $\I(a)$. 
Since cases 1 and 3 coincide with the previous construction, we have that $\I(a)=a+2^n \bmod 2^{n+2} = a+2^n$ for all $a \in [2^{n+1}]$, which corresponds to the first case \Cref{eq:Index_is_PPP_hard_new}.

For $a \in A_e$, it holds that $\bd(a)$ is of the form $\bd(a)=1v'0$. Hence, we get that 
\[
\I(a)=\bc(f(\I'(01v'), \I'(01v')))=\bc(f(01v'+w, 01v'+w))=\bc(11v'), 
\]
where the first equality comes from the definition of $\I$, the second one from the previous construction and the last one from the definition of $f$.
Furthermore, we have that
$\bc(11v')=2^{n+1}+2^n+\bc(v') = 2^{n+1}+\bc(01v')=2^{n+1}+\tfrac{a}{2}$, which proves the second case in \Cref{eq:Index_is_PPP_hard_new}. 

For $a \in A_o$, it holds that $\bd(a)$ is of the form $\bd(a)=1v'1$. Hence, we get that
\begin{equation}\label{eq:Index_is_PPP_Hard_001}
\I(a)=\bc(f(g, \bd(\I(\bc(1v'0))))).
\end{equation}
Moreover, it holds that $\bc(\bd(\I(\bc(1v'0))))= \I(\bc(1v'0)) \in [2^{n+1}, 2^{n+2}-1]$ from the already proved second part of the relationship in \cref{eq:Index_is_PPP_hard_new} since $\bc(1v'0) \in A_e$.
Hence, the third case from the definition of $f$ in \Cref{eq:Index_is_PPP_hard_def_of_f}  applies to \Cref{eq:Index_is_PPP_Hard_001} and we get that 
\begin{align*}
\I(a)&=\bc(\circuit{C}(\bd^n(\bc(\bd(\I(\bc(1v'0))))-2^n-2^{n+1})))\\&=\bc(\circuit{C}(\bd^n(\I(\bc(1v'0))-2^n-2^{n+1}))) \\
&= \bc(\circuit{C}(\bd^n(\tfrac{\bc(1v'0)}{2}+2^{n+1} - 2^n - 2^{n+1}))) \\
&= \bc(\circuit{C}(\bd^n(\tfrac{\bc(1v'0)}{2}-2^n))) \\
&= \bc(\circuit{C}(\bd^n(\tfrac{a-1}{2}-2^n))),
\end{align*}
which proves the last case in \cref{eq:Index_is_PPP_hard_new} and concludes the proof of~\Cref{lemma:Pigeon_To_Index}.
\end{proof}

\subsection{\Index Lies in \PPP}\label{sec:Index-In-PPP}

The main idea of our reduction from \Index to \Pigeon is analogous to the reduction in~\cite{SotirakiZZ18} from their discrete logarithm problem in ``general groups'' to \Pigeon.
Although, we need to handle the additional second type of solution for \Index, which  corresponds to $f_\GG$ outputting an element outside $\GG$.
The formal reduction proving \Cref{lemma:Index_To_Pigeon} is given below.

\begin{lemma}\label{lemma:Index_To_Pigeon}
\Index is reducible to \Pigeon.
\end{lemma}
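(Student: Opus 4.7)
The plan is to follow a Sotiraki-style reduction from \Index to \Pigeon and to recover the new type-2 \Index solution by tracing the square-and-multiply computation. Let $l=\lceil\log s\rceil$. The reduction will output the Boolean circuit $C\colon \B^l\to \B^l$ defined by
\[
C(u)=\bd^l\bigl((\widehat{\I}(\bc(u))-t)\bmod 2^l\bigr),
\]
where $\widehat{\I}\colon [2^l]\to [2^l]$ extends the indexing function by $\widehat{\I}(a)=a$ for $a\geq s$ and $\widehat{\I}(a)=\I(a)$ for $a<s$. Because \Cref{algo:Ig} performs only $O(l)$ invocations of $f$, this $C$ can be built from the instance $(s,f,id,g,t)$ in polynomial time.

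For the solution-extraction step, I would argue that any \Pigeon solution for $C$ yields an \Index solution. First, if the oracle returns $u$ with $C(u)=0^l$, then $\widehat{\I}(\bc(u))=t$; since $t<s$, the extension branch $\bc(u)\geq s$ cannot apply, so $\bc(u)\in[s]$ and $\I(\bc(u))=t$, a type-1 solution. Second, if the oracle returns distinct $u,v$ with $C(u)=C(v)$, then $\widehat{\I}(\bc(u))=\widehat{\I}(\bc(v))$. If both $\bc(u),\bc(v)<s$, we obtain $\I(\bc(u))=\I(\bc(v))$ with $\bc(u)\neq\bc(v)$, a type-3 solution; if both are $\geq s$, then $\bc(u)=\bc(v)$, contradicting $u\neq v$.

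The hard part will be the remaining mixed case, say $\bc(u)<s\leq\bc(v)$, in which $\I(\bc(u))=\bc(v)\geq s$ and hence the computation of $\I(\bc(u))$ escapes $[s]$. Here I would retrace \Cref{algo:Ig} starting from $r_0=\bd(id)$ (so $\bc(r_0)=id<s$) and locate the first intermediate value $r_i$ with $\bc(r_i)\geq s$. Since $\bc(r_{i-1})<s$ and $r_i$ arises either from a squaring $r_i=f(r_{i-1},r_{i-1})$ or a multiplication $r_i=f(\bd(g),r_{i-1})$ by the generator, I obtain a pair $(x,y)\in[s]\times[s]$ with $f_\GG(x,y)\geq s$, yielding a type-2 \Index solution and completing the reduction.
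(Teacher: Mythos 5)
Your reduction is essentially the same as the paper's: a Sotiraki-style circuit $C$ that sends $u$ to the bit decomposition of $\I(\bc(u))-t$ modulo a suitable quantity (you reduce modulo $2^l$ via the extended map $\widehat{\I}$, whereas the paper reduces modulo $s$ with an identity branch for $\bc(u)\geq s$), together with retracing the square-and-multiply computation of $\I$ to locate the first intermediate value that escapes $[s]$ and thereby produce a type-2 \Index solution. The modulus choice only shifts where the escape analysis is invoked — the paper must also handle $\I(\bc(u))\geq s$ inside the $C(u)=0^l$ case, while your $\bmod\,2^l$ variant confines it to the mixed collision case — but both proofs rest on the same two ideas, and both implicitly read the second solution type of \Index without its stated distinctness requirement (which is necessary, since the escaping step may be a squaring $f(r,r)$).
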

\begin{proof}
Let $(s,f,id,g,t)$ be an arbitrary instance of \Index. Then we know that $\ind: [s] \to [2^l]$, where $l = \lceil \log(s) \rceil$. We construct a circuit $\circuit{C}: \{0,1\}^l \to \{0,1\}^l$ as follows:
\[
    \circuit{C}(x) = 
\begin{cases}
\bd(\ind(\bc(x)) - t \bmod s) &\mbox{if } \bc(x) < s,\\
x &\mbox{otherwise}.\\
\end{cases}
\]
We show that any solution to the above \problem{Pigeon} instance \circuit{C} gives a solution to the original \Index instance. There are two possible cases:
\begin{enumerate}
    \item The solution to \circuit{C} is $x \in \{0,1\}^l$ such that $\circuit{C}(x) = 0^l$. Then, from the definition of the circuit \circuit{C}, it holds that $\circuit{C}(x) = \bd(\ind(\bc(x)) - t \bmod s) = 0^l$ and $\bc(x) < s$. 
    Because the function bd is bijective and $\bd(0) = 0^l$, it must hold that $\ind(\bc(x)) - t \bmod s = 0$. 
    If $\ind(\bc(x)) \geq s$, then, from the definition of the function $\I$, we can find $u,v \in [s]$ such that $u \neq v$ and $f_\GG(u,v) \geq s$, i.e, a solution to the original \Index instance, case $2.$ Otherwise, $\ind(\bc(x)) < s$, and because $t \in [s]$, i.e., $t < s$, it must be that $\ind(\bc(x)) = t$. Hence, $\bc(x)$ is a solution to the original \Index instance, case $1.$
    
    \item The solution to \circuit{C} is a pair $x, y \in \{0,1\}^l$, such that $x \neq y$ and $\circuit{C}(x) = \circuit{C}(y)$. Then, from the definition of the circuit \circuit{C} and the bijective property of $\bd$, it must hold that $\bc(x) < s, \bc(y) < s$ and $$\ind(\bc(x)) - t = \ind(\bc(y)) - t \pmod s,$$
    which implies that
    $$
    \ind(\bc(x)) = \ind(\bc(y)) \pmod s.
    $$
    If $\ind(\bc(x)) \geq s$, then we can find a solution to the original \Index instance, case $2$, analogously as above. 
    The same holds if $\ind(\bc(y)) \geq s$.
    Otherwise, both  $\ind(\bc(x)), \ind(\bc(y)) \in [s]$, hence
    $$
    \ind(\bc(x)) = \ind(\bc(y)).
    $$
    Moreover, from $x \neq y$ and the bijectivite property of $\bc$, we get that $\bc(x) \neq \bc(y)$. Hence, the pair $\bc(x), \bc(y)$ is a solution to the original \Index instance, case $3.$
    \qedhere
\end{enumerate}
\end{proof}

\section{Ensuring the Totality of Search Problems in Number Theory}\label{sec:Totality-Number-Theory}

In this section, we discuss some of the issues that arise when defining total search problems corresponding to specific problems in computational number theory which can be solved efficiently when given access to a \PPP or \PWPP oracle.

\subsection{\DLP in Specific Groups}

In this section, we present a formalization of the discrete logarithm problem in $\Z_p^*$, i.e., the multiplicative group of integers modulo a prime $p$.
Our goal is to highlight the distinction between the general \DLog as defined in~\Cref{def:problem_DLOG} and the discrete logarithm problem in any specific group $\Z_p^*$.
In particular, we argue that the latter is unlikely to be \PWPP-complete.

\begin{definition}[\DLogp]\label{def:problem-DLogp}
The search problem \DLogp is defined via the following relation of instances and solutions.
\begin{description}
    \item[Instance:] Distinct primes $p,p_1,\ldots,p_n\in\N$, natural numbers $k_1\ldots,k_n\in\N$, and $g,y\in\Z_p^*$ such that
    \begin{enumerate}
        \item $p-1=\prod_{i=1}^{n}p_i^{k_i}$ and
        \item $g^{(p-1)/p_i}\neq 1$ for all $i\in\{1,\ldots,n\}$.
    \end{enumerate}
    \item[Solution:] An $x\in\{0,\ldots p-2\}$ such that $g^x=y$.
\end{description}
\end{definition}
The second condition in the previous definition ensures that $g$ is a generator of $\mathbb{Z}_p^*$ due to the Lagrange theorem.
If $g$ was not a generator, then there would be a $q \in \{1,\dots,p-2\}$ such that $g^q=1$. 
We consider the smallest such $q$. 
Then $(g,g^2,g^3,\dots,g^q)$ forms a subgroup of $\mathbb{Z}_p^*$, and, from the Lagrange theorem, we get that $q=p_1^{l_1} \cdot \ldots \cdot p_n^{l_n}$ such that at least one $l_i < k_i$. 
Then $g^{(p-1)/p_i}$ is a power of $g^q$, implying that $g^{(p-1)/p_i}=1$, which would be a contradiction with the second condition in~\Cref{def:problem-DLogp}.
Hence, $g$ is a generator of $\mathbb{Z}_p^*$.
We note that a similar proof of totality for the \DLP in $\mathbb{Z}_p^*$ was used by Krajíček and Pudlák~\cite{KrajicekP98}, who studied \DLP in terms of definability in bounded arithmetic.

Our first observation is a straightforward upper bound for \DLogp that follows by showing its inclusion in \PWPP.

\begin{lemma}\label{lemma:DLogp-in-PWPP}
\DLogp is reducible to \DLog. 
\end{lemma}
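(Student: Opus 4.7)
The plan is to build a \DLog instance that faithfully encodes multiplication in $\mathbb{Z}_p^*$. Given a \DLogp instance $(p, p_1, \ldots, p_n, k_1, \ldots, k_n, g, y)$, set $s = p - 1$, take the identity index to be $id = 0$, the generator index to be $g - 1 \in [s]$, and the target to be $t = y - 1 \in [s]$. The elements of $\GG = [s]$ then correspond to those of $\mathbb{Z}_p^*$ via the shift $a \in [s] \leftrightarrow a + 1 \in \mathbb{Z}_p^*$. Accordingly, define the Boolean circuit $f\colon \{0,1\}^{2\lceil\log s\rceil} \to \{0,1\}^{\lceil\log s\rceil}$ so that for all $u, v \in \{0,1\}^{\lceil\log s\rceil}$ with $\bc(u), \bc(v) \in [s]$,
\[
f(u, v) = \bd\bigl(((\bc(u) + 1)\cdot(\bc(v) + 1) \bmod p) - 1\bigr),
\]
and arbitrarily on the remaining inputs. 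The circuit $f$ has size polynomial in $\log p$ and can be constructed in polynomial time from the \DLogp instance.

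I would then verify, by induction on $|\bd_0(x)|$, that the induced indexing function satisfies $\I(x) = (g^x \bmod p) - 1$ for every $x \in [s]$, where $g$ denotes the \DLogp generator. The induction rests on two facts: $id = 0$ encodes the multiplicative identity $1 \in \mathbb{Z}_p^*$, and under the shift $a \leftrightarrow a+1$ the circuit $f$ realizes multiplication modulo $p$, so the square-and-multiply loop of \Cref{algo:Ig} evaluates exactly $g^x \bmod p$ before shifting back. Since the verification conditions in \Cref{def:problem-DLogp} ensure that $g$ is a generator of $\mathbb{Z}_p^*$, the map $x \mapsto g^x \bmod p$ is a bijection of $[s]$ onto $\{1, \ldots, p-1\}$, so $\I$ is a bijection of $[s]$ onto itself.

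With $\I$ understood, I would rule out each spurious \DLog solution type. The image of $f_\GG$ on $[s] \times [s]$ lies in $[s]$ by construction, eliminating type 2. Bijectivity of $\I$ eliminates type 3. For type 4, distinct $a, b \in [s]$ with $f_\GG(t, \I(a)) = f_\GG(t, \I(b))$ would give $y \cdot g^a \equiv y \cdot g^b \pmod p$; cancellation in $\mathbb{Z}_p^*$ together with $g$ having order $p - 1 = s$ forces $a = b$, a contradiction. For type 5, $\I(a) = f_\GG(t, \I(b))$ unfolds to $g^a \equiv y \cdot g^b \pmod p$, hence $y \equiv g^{a-b} \pmod p$; by Fermat's little theorem $g^{a-b} \equiv g^{(a-b) \bmod s} \pmod p$, so $\I((a-b) \bmod s) = y - 1 = t$, contradicting the defining inequality of type 5.

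Consequently the only possible \DLog solution is of type 1, an $x \in [s]$ with $\I(x) = t$, which by the computation of $\I$ is precisely the condition $g^x \equiv y \pmod p$ with $x \in \{0,\ldots,p-2\}$ — a valid \DLogp solution. The solution-translation map of the reduction therefore returns $x$ on a type 1 output and an arbitrary default on the (non-occurring) other types. The main technical content lies in the case analysis for solution types 4 and 5, where the second verification condition of \Cref{def:problem-DLogp} is crucial to guarantee that $g$ truly has order $p - 1$; the rest is routine bookkeeping about the shifted encoding.
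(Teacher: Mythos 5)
Your proposal is correct and follows essentially the same approach as the paper's proof: encode $\mathbb{Z}_p^*$ by $[p-1]$, let $f$ implement the group multiplication, observe that the square-and-multiply loop of \Cref{algo:Ig} then computes genuine powers of $g$, and rule out solution types 2--5 using the fact that $g$ has order $p-1$. The only difference is that you spell out a concrete bijection $a \leftrightarrow a+1$ and the resulting formula for $f$, whereas the paper simply says ``fix a representation of $\Z_p^*$ by $[p-1]$''; this is a presentational choice, not a different argument.
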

\begin{proof}
Given an instance $(p,p_1,\ldots,p_n,k_1\ldots,k_n,g,y)$ of \DLogp, we first fix a representation of $\Z_p^*$ by $[p-1]$.
Then, we construct the natural instance $(s,f,id,g',t)$ of \DLog, where
\begin{itemize}
    \item $s=p-1$,
    \item $f$ implements multiplication in $\Z_p^*$ w.r.t. the fixed representation of $\Z_p^*$,
    \item $id$ is the representation of $1\in\Z_p^*$ as an element of $[s]$,
    \item $g'$ is the representation of the given generator $g$ of $\Z_p^*$ as an element of $[s]$,
    \item $t$ is the representation of $y\in\Z_p^*$ as an element of $[s]$.
\end{itemize}
Now, we show that any solution to this instance $(s,f,id,g',t)$ of \DLog gives a solution to the original instance of $\DLog_p$. There are five types of solutions in \DLog:
\begin{enumerate}
    \item $a \in [s]$ such that $\I(a) = t.$ Since $f$ corresponds to a valid group operation, it holds that $\I(a)=g^a$. Hence, $g^a=t=y$ and $a \in [s]=\{0,\dots,p-2\}$ is a solution to the original instance of $\DLog_p$.  
    
    \item $a,b \in [s]$ such that $f_\GG(a,b) \geq s$. Since $f$ corresponds to a valid group operation, this case cannot happen. 
    
    \item $a,b \in [s]$ such that $a \neq b$ and $\I(a)=\I(b)$. Since $f$ corresponds to a valid group operation, we get that
    $g^a=\I(a)=\I(b)=g^b$.
    Suppose without loss of generality that $a>b$. Then, the previous relationship implies that $g^{a-b}=1$, where $a-b \neq 0$ and $a-b <p-1$. This would be a contradiction with the fact the $g$ is a generator of $\mathbb{Z}_p^*$. Hence, this case cannot happen.  
    
    \item $a,b \in [s]$ such that $a\neq b$ and $f_\GG(t,\I(a))=f_\GG(t,\I(b))$. Since $f$ corresponds to a valid group operation, the previous equality implies that $y\cdot g^a = y \cdot g^b$. 
    By cancelling $y$, we get that $g^a = g^b$ with $a \neq b$ and $a, b \in [s]$. For the same reason as in the previous case, this case cannot happen.  
    
    \item $a,b \in [s]$ such that $\I(a)=f_\GG(t,\I(b))$ and $\I(a-b \bmod s) \neq t$. Since $f$ corresponds to a valid group operation, we get that $g^a = y \cdot g^b$ and $g^{a-b} = g^{a-b \bmod s} \neq y$, which is impossible. Hence, this case cannot happen.
    \qedhere
\end{enumerate}
\end{proof}

Note that the proof of~\Cref{lemma:DLogp-in-PWPP} shows that the indexing function defined by taking the respective powers of $g$ is a bijection and any instance of \DLogp has a \emph{unique} solution.
Thus, there is  a stronger upper bound on the complexity of \DLogp in terms of containment in the class \TFUP, i.e., the subclass of \TFNP of total search problems with syntactically guaranteed unique solution for every instance.

\begin{corollary}\label{cor:DLogp-in-TFUP}
$\DLogp\in\TFUP$
\end{corollary}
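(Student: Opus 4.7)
The proof is an observation extracted from the analysis already carried out in the proof of \Cref{lemma:DLogp-in-PWPP}. My plan is to verify the two defining properties of $\TFUP$ membership for $\DLogp$: that it is a total $\NP$ search problem, and that every valid instance admits a \emph{unique} solution.

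First I would verify the $\NP$ structure. Instance validity is checkable in polynomial time: primality of $p$ and of each $p_i$ via any polynomial-time primality test, correctness of the factorization $p - 1 = \prod_{i=1}^n p_i^{k_i}$ by direct multiplication, and the generator certificates $g^{(p-1)/p_i} \neq 1 \pmod{p}$ via repeated-squaring modular exponentiation. A purported solution $x \in \{0,\ldots,p-2\}$ is then verified by checking $g^x = y \pmod{p}$, again via fast exponentiation, with $|x|$ polynomially bounded in the instance length.

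Next, totality and uniqueness follow in one stroke from the Lagrange-theorem argument already spelled out between \Cref{def:problem-DLogp} and \Cref{lemma:DLogp-in-PWPP}: the validity conditions force $g$ to have order exactly $p-1$ in $\mathbb{Z}_p^*$. Hence the indexing map $\psi\colon \{0,\ldots,p-2\} \to \mathbb{Z}_p^*$ given by $\psi(x) = g^x$ is a bijection; indeed, if $g^a = g^b$ for $a,b \in \{0,\ldots,p-2\}$, then $g^{a-b} = 1$ with $|a-b| < p - 1$, which forces $a = b$. Surjectivity of $\psi$ yields totality, and injectivity of $\psi$ yields uniqueness of the discrete logarithm of $y$.

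There is no substantive obstacle here: the corollary is essentially a restatement of the fact---implicit in the proof of \Cref{lemma:DLogp-in-PWPP}---that the indexing function constructed from a valid $\DLogp$ instance is a bijection, so none of the alternative solution types 2--5 of the induced \DLog instance can occur and the discrete logarithm of $y$ is the only admissible solution. The only mild subtlety worth flagging is that to conform to the paper's convention of treating invalid instances as having $Y_i = \B^*$, one can without loss of generality redefine \DLogp on the (polynomial-time recognizable) set of invalid instances to output a fixed canonical string, thereby genuinely achieving unique solutions on every input.
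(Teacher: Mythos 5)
Your proposal is correct and follows essentially the same route as the paper: the corollary is extracted as an immediate observation from the bijectivity of the indexing map $x \mapsto g^x$ established via the Lagrange-theorem argument preceding Lemma~\ref{lemma:DLogp-in-PWPP}, with injectivity giving uniqueness and surjectivity giving totality. The extra material you include (polynomial-time verifiability of instance validity and the remark about handling invalid instances under the $Y_i = \B^*$ convention) is sound bookkeeping that the paper leaves implicit, but it does not change the substance of the argument.
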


We consider the existence of a reduction of an arbitrary instance of \Collision to a search problem with a unique solution for all instances such as \DLogp to be implausible.
Thus, we conjecture that \DLogp cannot be \PWPP-complete.

\subsection{\Blichfeldt}\label{sec:Blichfeldt}
Both our reductions establishing \PWPP-hardness of \DLog and \PPP-hardness of \Index result in instances that induce groupoids unlikely to satisfy the group axioms.
In other words, the resulting instances do not really correspond to \DLP in any group.
It is natural to ask whether this property is common to other \PWPP and \PPP hardness results.
In this section, we revisit the problem \Blichfeldt introduced in~\cite{SotirakiZZ18} and show that its \PPP-hardness can be established via a reduction that relies solely on the representation of the computational problem and does not use the type of solution corresponding to the ones guaranteed by the Blichfeldt's theorem.
A natural question is whether formalizations of other problems  from computational number theory, e.g., the computational versions of the Minkowski's theorem and the Dirichlet's approximation theorem defined in~\cite{BanJPPR19}, exhibit a similar phenomenon.

Below, we use the natural extension of the bit composition and decomposition functions (defined in~\Cref{sec:Preliminaries}) when applied to vectors.

\begin{definition}[\Blichfeldt]\label{def:problem_Blich}
The search problem \Blichfeldt is defined via the following relation of instances and solutions.
\begin{description}
    \item[Instance:] A triple $(\mathbf{B},s,\circuit{V})$, where $\mathbf{B} \in \mathbb{Z}^{n \times n}$ is an $n$-dimensional basis, $s$ $\in \mathbb{Z}^+$ is a size parameter such that  $s \ge \operatorname{det}(\mathcal{L}(\mathbf{B}))$, and $\circuit{V}$ is a Boolean circuit with $k=\lceil \log(s) \rceil$ inputs and $l$ outputs defining a set of vectors $S \subseteq \mathbb{Z}^n$ as $S = \left\{\bc \left(\circuit{V}\left(\bd\left(i\right)\right)\right), i \in [s] \right\}$.
    \item[Solution:] One of the following: 
    \begin{enumerate}
        \item distinct $u,v \in \{0,1\}^n$ such that $\circuit{V}(u) = \circuit{V}(v)$,
        \item a vector $x$ such that $x \in S \cap \mathcal{L}(\mathbf{B})$,
        \item distinct $x, y \in S$ such that $x - y \in \mathcal{L}(\mathbf{B})$.
    \end{enumerate} 
\end{description}
\end{definition}

In their work, \cite{SotirakiZZ18} showed that \Blichfeldt is \PPP-hard by a reduction from \Pigeon that relies on some nontrivial properties of $q$-ary lattices.
We show that this is unnecessary and give a more direct reduction that exploits the circuit $\circuit{V}$ in the definition of \Blichfeldt.
One particularly interesting property of our reduction is that it completely bypasses the solutions corresponding to the Blichfeldt's theorem in \Blichfeldt.
Specifically, all instances produced by our reduction are defined w.r.t. a fixed basis $\mathbf{B}$.

\begin{lemma}
\problem{Blichfeldt} is $\PPP$-hard.
\end{lemma}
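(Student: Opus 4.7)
The plan is to give a direct reduction from \Pigeon to \Blichfeldt using a basis that is fixed and independent of the \Pigeon instance, thereby establishing \PPP-hardness without appealing to the $q$-ary lattice structure used in~\cite{SotirakiZZ18}. Given an arbitrary instance $\circuit{C}\colon\B^n\to\B^n$ of \Pigeon, I would construct the \Blichfeldt instance $(\mathbf{B},s,\circuit{V})$ by setting $\mathbf{B}=2\mathbf{I}_n\in\Z^{n\times n}$, $s=2^n$, and $\circuit{V}=\circuit{C}$, interpreting the $n$-bit output coordinate-wise as a vector in $\{0,1\}^n\subseteq\Z^n$ (so $l=n$ and each dimension uses a single bit). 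The lattice $\mathcal{L}(\mathbf{B})=(2\Z)^n$ has determinant $2^n=s$, so the size constraint $s\ge\det(\mathcal{L}(\mathbf{B}))$ is satisfied, and $\lceil\log(s)\rceil=n$ matches the input length of $\circuit{C}$, so the construction is valid.

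Next, I would verify that each of the three solution types for the constructed instance yields a \Pigeon solution. A type~1 solution is a pair of distinct $u\neq v$ with $\circuit{V}(u)=\circuit{V}(v)$, which is immediately a non-trivial collision in $\circuit{C}$. A type~2 solution is a vector $x\in S\cap\mathcal{L}(\mathbf{B})$ (together with an index $i$ witnessing $x=\bc(\circuit{V}(\bd(i)))$); since $x\in\{0,1\}^n$ and every coordinate of $x$ is even, we must have $x=0^n$, and so $\bd(i)$ is a preimage of $0^n$ under $\circuit{C}$. A type~3 solution would consist of distinct $x,y\in S$ with $x-y\in(2\Z)^n$; but $x,y\in\{0,1\}^n$ forces each coordinate of $x-y$ to lie in $\{-1,0,1\}$, and being even then forces $x=y$, contradicting distinctness, so this case is vacuous.

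The structurally interesting features of the reduction, which match the remarks preceding the lemma in the paper, are that the basis $\mathbf{B}$ does not depend on $\circuit{C}$, and that type~3 solutions -- exactly the ones directly guaranteed by Blichfeldt's theorem -- are never produced: totality of the reduced instance is instead witnessed either by a circuit-level collision (type~1) or by $0^n\in S$ (type~2). I do not expect any substantive obstacle in writing this out. Once one views $\{0,1\}^n$ as a complete set of coset representatives for $\Z^n/(2\Z)^n$, the lattice membership conditions collapse cleanly onto the two \Pigeon solution conditions, and the remaining work is routine bookkeeping.
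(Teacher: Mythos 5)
Your reduction uses the same basis $\mathbf{B}=2\mathbf{I}_n$ and parameter $s=2^n$ as the paper, but sets $\circuit{V}=\circuit{C}$ directly, whereas the paper first returns $0^n$ outright if $\circuit{C}(0^n)=0^n$ and otherwise defines $\circuit{V}(x)=\circuit{C}(x)$ when $\circuit{C}(x)\neq 0^n$ and $\circuit{V}(x)=\circuit{C}(0^n)$ when $\circuit{C}(x)=0^n$. The point of this redefinition is that $0^n\notin S$, so type~2 solutions can never arise in the produced instance; combined with the parity argument ruling out type~3, the only solutions are type~1 collisions, from which the paper extracts either a preimage of $0^n$ or a nontrivial collision for $\circuit{C}$. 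Your proof, in contrast, must \emph{use} type~2 solutions, and there it leans on an assumption you insert parenthetically: that a type~2 solution comes ``together with an index $i$ witnessing $x=\bc(\circuit{V}(\bd(i)))$.'' \Cref{def:problem_Blich} as stated gives only the vector $x$; if the index is not part of the witness, then given $x=0^n$ you have no efficient way to produce a preimage of $0^n$ under $\circuit{C}$, and the reduction is not polynomial time. The paper's extra bookkeeping exists precisely to sidestep this ambiguity.

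A second, smaller mismatch: you claim your reduction ``bypasses exactly the solutions guaranteed by Blichfeldt's theorem,'' pointing at type~3 only, but the paper's remark refers to bypassing \emph{both} types~2 and~3. Your construction does depend on type~2, so if you keep $\circuit{V}=\circuit{C}$ you should either justify the index-is-part-of-the-witness reading of \Cref{def:problem_Blich}, or adopt the paper's pre-check and modified $\circuit{V}$ so that type~2 solutions provably cannot occur. With that fix, everything else in your argument (the determinant check, the parity argument eliminating type~3, and the handling of type~1) is correct and matches the paper's reasoning.
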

\begin{proof}
We show a reduction from \problem{Pigeon} to \problem{Blichfeldt}. We start with an arbitrary instance $\circuit{C}: \left\{0,1\right\}^n \to \left\{0,1\right\}^n$ of \problem{Pigeon}. If $\circuit{C}(0^n) = 0^n$, then we output $0^n$ as a solution to this instance $\circuit{C}$ without invoking the \problem{Blichfeldt} oracle. Otherwise, we construct an  instance of \problem{Blichfeldt} as follows: 
\begin{itemize}
    \item We define $\m{B} = 2\cdot I_n$, i.e., the $n\times n$ diagonal matrix with $2$'s on its diagonal and $0$'s elsewhere. 
    \item We set $s = 2^n$. 
    \item We define the circuit $\circuit{V}: \{0,1\}^n \to \{0,1\}^n$ as follows:
    \[
    \circuit{V}(x) = 
\begin{cases}
\circuit{C}(x) &\mbox{if } \circuit{C}(x) \neq 0^n,\\
\circuit{C}(0^n) &\mbox{otherwise}.\\
\end{cases}
\]
\end{itemize} 

Note that $\bc$ maps any binary string $x=(x_1,\ldots, x_n)$ output by $\circuit{V}$ to an identical vector $(x_1,\ldots, x_n)^T$ in $\mathbb{Z}^n$.
In particular, all coordinates are either $0$ or $1$ for any vector from  the set $S$ defined by  $s$ and $\circuit{V}$ from the above instance of \problem{Blichfeldt}.

We now show that any solution to the above \problem{Blichfeldt} instance gives a solution to the original \problem{Pigeon} instance $\circuit{C}$. First, notice that $\operatorname{det}(\m{B}) = 2^n \ge s= 2^n$. Thus, we defined a valid instance of \Blichfeldt w.r.t. \Cref{def:problem_Blich}. 
Next, we argue that there are no solutions of the second type in the above instance, i.e., the \problem{Blichfeldt} oracle cannot return a vector $x$ such that $x \in S \cap \mathcal{L}(\textbf{B})$: From the definition of $\circuit{V}$ and the fact that $\circuit{C}(0^n) \neq 0^n$, we get that $0^n \notin S$, but $0^n$ is the only vector in $\{0,1\}^n \cap \mathcal{L}(\m{B})$.  

Furthermore, there are also no solutions of the third type for any instance defined as above, i.e., the \problem{Blichfeldt} oracle cannot return distinct vectors $x, y \in S$ such that $x - y \in \mathcal{L}(\m{B})$.
Indeed, all vectors in $S$ have coefficients in $\left\{0,1\right\}$ and, thus, all coefficients of $x - y$ would lie in $\left\{-1,0,1\right\}$. However, the only such vector also contained in $\mathcal{L}(\m{B})$ is $0^n$, which would imply $x = y$.

Hence, there are only solutions of the first type and the \problem{Blichfeldt} oracle returns two distinct strings $u,v \in \{0,1\}^n$ such that $\circuit{V}(u) = \circuit{V}(v)$. If $\circuit{C}(u) = 0^n$, then $u$ is a solution to the original \problem{Pigeon} instance $\circuit{C}$. Similarly, if $\circuit{C}(v) = 0^n$, then $v$ is a solution to $\circuit{C}$. Otherwise, it holds that $\circuit{C}(u) \neq 0^n \neq \circuit{C}(v)$. Hence, from the definition of $\circuit{V}$, we get that 

$$\circuit{C}(u) = \circuit{V}(u) = \circuit{V}(v) = \circuit{C}(v),$$

and the pair $u, v$ is a solution to the original instance $\circuit{C}$ of \Pigeon. 
\end{proof}

\section*{Acknowledgements}
We wish to thank the anonymous reviewers for helpful suggestions on the presentation of our results.
The first author is grateful to Chethan Kamath for multiple enlightening discussions about \TFNP and for proposing \Dove as a name for a total search problem contained in the complexity class \PPP.

\bibliographystyle{alpha}
\bibliography{bibliography}

\newcommand{\etalchar}[1]{$^{#1}$}
\begin{thebibliography}{CHK{\etalchar{+}}19b}

\bibitem[BG20]{BitanskyG20}
Nir Bitansky and Idan Gerichter.
\newblock On the cryptographic hardness of local search.
\newblock In {\em 11th Innovations in Theoretical Computer Science Conference,
  {ITCS} 2020, January 12-14, 2020, Seattle, Washington, {USA}}, pages
  6:1--6:29, 2020.

\bibitem[BJP{\etalchar{+}}19]{BanJPPR19}
Frank Ban, Kamal Jain, Christos~H. Papadimitriou, Christos{-}Alexandros Psomas,
  and Aviad Rubinstein.
\newblock Reductions in {PPP}.
\newblock {\em Inf. Process. Lett.}, 145:48--52, 2019.

\bibitem[BPR15]{BitanskyPR15}
Nir Bitansky, Omer Paneth, and Alon Rosen.
\newblock On the cryptographic hardness of finding a {Nash} equilibrium.
\newblock In {\em {IEEE} 56th Annual Symposium on Foundations of Computer
  Science, {FOCS} 2015, Berkeley, CA, USA, 17-20 October, 2015}, pages
  1480--1498, 2015.

\bibitem[CDT09]{ChenDT09}
Xi~Chen, Xiaotie Deng, and Shang{-}Hua Teng.
\newblock Settling the complexity of computing two-player {Nash} equilibria.
\newblock {\em J. {ACM}}, 56(3), 2009.

\bibitem[CHK{\etalchar{+}}19a]{ChoudhuriHKPRR19}
Arka~Rai Choudhuri, Pavel Hub{\'{a}}{\v{c}}ek, Chethan Kamath, Krzysztof
  Pietrzak, Alon Rosen, and Guy~N. Rothblum.
\newblock Finding a {Nash} equilibrium is no easier than breaking
  {Fiat-Shamir}.
\newblock In {\em Proceedings of the 51st Annual {ACM} {SIGACT} Symposium on
  Theory of Computing, {STOC} 2019, Phoenix, AZ, USA, June 23-26, 2019}, pages
  1103--1114. {ACM}, 2019.

\bibitem[CHK{\etalchar{+}}19b]{ChoudhuriHKPRR19a}
Arka~Rai Choudhuri, Pavel Hub{\'{a}}{\v{c}}ek, Chethan Kamath, Krzysztof
  Pietrzak, Alon Rosen, and Guy~N. Rothblum.
\newblock {PPAD}-hardness via iterated squaring modulo a composite.
\newblock {\em {IACR} Cryptology ePrint Archive}, 2019:667, 2019.

\bibitem[Dam87]{Damgard87}
Ivan Damg{\aa}rd.
\newblock Collision free hash functions and public key signature schemes.
\newblock In David Chaum and Wyn~L. Price, editors, {\em Advances in Cryptology
  - {EUROCRYPT} '87, Workshop on the Theory and Application of of Cryptographic
  Techniques, Amsterdam, The Netherlands, April 13-15, 1987, Proceedings},
  volume 304 of {\em Lecture Notes in Computer Science}, pages 203--216.
  Springer, 1987.

\bibitem[DGP09]{DaskalakisGP09}
Constantinos Daskalakis, Paul~W. Goldberg, and Christos~H. Papadimitriou.
\newblock The complexity of computing a {Nash} equilibrium.
\newblock {\em {SIAM} J. Comput.}, 39(1):195--259, 2009.

\bibitem[EFKP20]{EphraimFKP20a}
Naomi Ephraim, Cody Freitag, Ilan Komargodski, and Rafael Pass.
\newblock Continuous verifiable delay functions.
\newblock In {\em Advances in Cryptology - {EUROCRYPT} 2020 - 39th Annual
  International Conference on the Theory and Applications of Cryptographic
  Techniques, Zagreb, Croatia, May 10-14, 2020, Proceedings, Part {III}},
  volume 12107 of {\em Lecture Notes in Computer Science}, pages 125--154,
  2020.

\bibitem[FG19]{Filos-RatsikasG19}
Aris Filos{-}Ratsikas and Paul~W. Goldberg.
\newblock The complexity of splitting necklaces and bisecting ham sandwiches.
\newblock In {\em Proceedings of the 51st Annual {ACM} {SIGACT} Symposium on
  Theory of Computing, {STOC} 2019, Phoenix, AZ, USA, June 23-26, 2019}, pages
  638--649, 2019.

\bibitem[GPS16]{GargPS16}
Sanjam Garg, Omkant Pandey, and Akshayaram Srinivasan.
\newblock Revisiting the cryptographic hardness of finding a {Nash}
  equilibrium.
\newblock In {\em Advances in Cryptology - {CRYPTO} 2016 - 36th Annual
  International Cryptology Conference, Santa Barbara, CA, USA, August 14-18,
  2016, Proceedings, Part {II}}, pages 579--604, 2016.

\bibitem[HKKS20]{HubacekKKS20}
Pavel Hub{\'{a}}{\v{c}}ek, Chethan Kamath, Karel Král, and Veronika Slívová.
\newblock On average-case hardness in {TFNP} from one-way functions.
\newblock In Rafael Pass and Krzysztof Pietrzak, editors, {\em Theory of
  Cryptography - 18th International Conference, TCC 2020, Durham, NC, USA,
  November 16-19, 2020, Proceedings, Part III}, volume 12552 of {\em Lecture
  Notes in Computer Science}, pages 614--638. Springer, 2020.

\bibitem[HNY17]{HubacekNY17}
Pavel Hub{\'{a}}{\v{c}}ek, Moni Naor, and Eylon Yogev.
\newblock The journey from {NP} to {TFNP} hardness.
\newblock In {\em 8th Innovations in Theoretical Computer Science Conference,
  {ITCS} 2017, January 9-11, 2017, Berkeley, CA, {USA}}, pages 60:1--60:21,
  2017.

\bibitem[HV21]{HubacekV21}
Pavel Hub{\'{a}}\v{c}ek and Jan V{\'{a}}clavek.
\newblock On search complexity of discrete logarithm.
\newblock In Filippo Bonchi and Simon~J. Puglisi, editors, {\em 46th
  International Symposium on Mathematical Foundations of Computer Science,
  {MFCS} 2021, August 23-27, 2021, Tallinn, Estonia}, volume 202 of {\em
  LIPIcs}, pages 60:1--60:16. Schloss Dagstuhl - Leibniz-Zentrum f{\"{u}}r
  Informatik, 2021.

\bibitem[HY20]{HubacekY20}
Pavel Hub{\'{a}}\v{c}ek and Eylon Yogev.
\newblock Hardness of continuous local search: Query complexity and
  cryptographic lower bounds.
\newblock {\em {SIAM} J. Comput.}, 49(6):1128--1172, 2020.

\bibitem[IKO05]{IshaiKO05}
Yuval Ishai, Eyal Kushilevitz, and Rafail Ostrovsky.
\newblock Sufficient conditions for collision-resistant hashing.
\newblock In Joe Kilian, editor, {\em Theory of Cryptography, Second Theory of
  Cryptography Conference, {TCC} 2005, Cambridge, MA, USA, February 10-12,
  2005, Proceedings}, volume 3378 of {\em Lecture Notes in Computer Science},
  pages 445--456. Springer, 2005.

\bibitem[Je{\v{r}}16]{Jerabek16}
Emil Je{\v{r}}{\'{a}}bek.
\newblock Integer factoring and modular square roots.
\newblock {\em J. Comput. Syst. Sci.}, 82(2):380--394, 2016.

\bibitem[JKKZ21]{JawaleKKZ21}
Ruta Jawale, Yael~Tauman Kalai, Dakshita Khurana, and Rachel Zhang.
\newblock {SNARGs} for bounded depth computations and {PPAD} hardness from
  sub-exponential {LWE}.
\newblock In Samir Khuller and Virginia~Vassilevska Williams, editors, {\em
  {STOC} '21: 53rd Annual {ACM} {SIGACT} Symposium on Theory of Computing,
  Virtual Event, Italy, June 21-25, 2021}, pages 708--721. {ACM}, 2021.

\bibitem[JOP14]{JouxOP14}
Antoine Joux, Andrew~M. Odlyzko, and C{\'{e}}cile Pierrot.
\newblock The past, evolving present, and future of the discrete logarithm.
\newblock In {\c{C}}etin~Kaya Ko{\c{c}}, editor, {\em Open Problems in
  Mathematics and Computational Science}, pages 5--36. Springer, 2014.

\bibitem[KP98]{KrajicekP98}
Jan Kraj{\'{\i}}\v{c}ek and Pavel Pudl{\'{a}}k.
\newblock Some consequences of cryptographical conjectures for
  \({S}^{\mbox{1}}_{\mbox{2}}\) and {EF}.
\newblock {\em Inf. Comput.}, 140(1):82--94, 1998.

\bibitem[KS20]{KomargodskiS20}
Ilan Komargodski and Gil Segev.
\newblock From {Minicrypt} to {Obfustopia} via private-key functional
  encryption.
\newblock {\em J. Cryptol.}, 33(2):406--458, 2020.

\bibitem[LV20]{LombardiV20}
Alex Lombardi and Vinod Vaikuntanathan.
\newblock {Fiat-Shamir} for repeated squaring with applications to
  {PPAD}-hardness and {VDFs}.
\newblock In {\em Advances in Cryptology - {CRYPTO} 2020 - 40th Annual
  International Cryptology Conference, {CRYPTO} 2020, Santa Barbara, CA, USA,
  August 17-21, 2020, Proceedings, Part {III}}, volume 12172 of {\em Lecture
  Notes in Computer Science}, pages 632--651. Springer, 2020.

\bibitem[Pap94]{Papadimitriou94}
Christos~H. Papadimitriou.
\newblock On the complexity of the parity argument and other inefficient proofs
  of existence.
\newblock {\em J. Comput. Syst. Sci.}, 48(3):498--532, 1994.

\bibitem[Rus95]{Russell95}
Alexander Russell.
\newblock Necessary and sufficient condtions for collision-free hashing.
\newblock {\em J. Cryptol.}, 8(2):87--100, 1995.

\bibitem[SZZ18]{SotirakiZZ18}
Katerina Sotiraki, Manolis Zampetakis, and Giorgos Zirdelis.
\newblock {PPP}-completeness with connections to cryptography.
\newblock In Mikkel Thorup, editor, {\em 59th {IEEE} Annual Symposium on
  Foundations of Computer Science, {FOCS} 2018, Paris, France, October 7-9,
  2018}, pages 148--158. {IEEE} Computer Society, 2018.

\bibitem[ZMI90]{ZhengMI90}
Yuliang Zheng, Tsutomu Matsumoto, and Hideki Imai.
\newblock Duality between two cryptographic primitives.
\newblock In Shojiro Sakata, editor, {\em Applied Algebra, Algebraic Algorithms
  and Error-Correcting Codes, 8th International Symposium, AAECC-8, Tokyo,
  Japan, August 20-24, 1990, Proceedings}, volume 508 of {\em Lecture Notes in
  Computer Science}, pages 379--390. Springer, 1990.

\end{thebibliography}

\end{document}